\newtheorem{mydef}{Definition}
\newtheorem{prop}{Proposition}
\newtheorem{example}{Example}
\newtheorem{lemma}{Lemma}
\def\cdf(#1)(#2)(#3){0.5*(1+(erf((#1-#2)/(#3*sqrt(2)))))}%
\begin{document}

%BANK RUNS WITH QUOUES

\begin{center}\huge{Observing Actions in Global Games}
\end{center}
%\hspace{1cm}

\begin{center} \textit{Dominik Grafenhofer, Wolfgang Kuhle}\footnote{Wolfgang Kuhle (corresponding author), Zhejiang University, Hangzhou, China and MEA, Max Planck Institute for Social Law and Social Policy, Munich, Germany and VSE, Prague, Czech Republic, E-mail wkuhle@gmx.de. Dominik Grafenhofer, Deutsche Telekom, Bonn, Germany, E-mail econ@grafenhofer.at. Most of this paper was written during our time at the Max Planck Institute in Bonn, and we thank Martin Hellwig and Carl Christian von Weizsäcker for helpful and encouraging conversations. We also thank Philipp Koenig, Stephen Morris, Ruqu Wang, Wenzhang Zhang and Seminar participants in Bonn, Hangzhou, Prague, Tübingen, and the 30th international game theory conference in Stony Brook for comments and questions. The views and opinions expressed in this paper are solely those of the authors and do not necessarily represent the views of Deutsche Telekom.}
\end{center}

%Wolfgang Kuhle gratefully acknowledges financial support through the "International Mobility of Researchers" project grant $CZ02.2.69$.

\noindent\emph{\textbf{Abstract:} We study Bayesian coordination games where agents receive noisy private information over the game's payoffs, and over each others' actions. If private information over actions is of low quality, equilibrium uniqueness obtains in a manner similar to a global games setting. On the contrary, if private information over actions (and thus over the game's payoff coefficient) is precise, agents can coordinate on multiple equilibria. We argue that our results apply to phenomena such as bank-runs, currency crises, recessions, or riots and revolutions, where agents monitor each other closely.}\\
\textbf{Keywords: Coordination Games, Global Games, Conjectural Equilibrium}\\
\textbf{JEL: D82, D83}

\hspace{2.5cm}

\noindent\emph{ERNIE: There is something funny going on over there at the
bank, George, I've never really seen one, but that's got all the
earmarks of a run.}\\
\emph{PASSERBY: Hey, Ernie, if you have any money in the bank, you
better hurry.}\footnote{From the movie script "It's a wonderful life."}

%\newpage

\section{Introduction} 

The global games approach assumes that players face asymmetric private information over the game's payoff structure. In turn, each agent has to use his own signal over the game's payoffs to infer the signals, and thus the actions, of the other agents. Such inference makes it difficult to coordinate on multiple equilibria, and whenever private information over the game's payoffs is very precise, but not perfect, the global games structure selects unique equilibria.

One implicit assumption in the global games approach of \citet{Rub89}, \citet{Car93} and \citet{Mor98} is that agents cannot observe each others' actions directly. At the same time, global games are used extensively to model phenomena such as bank-runs, riots and revolutions, or joint investment projects, where agents study each others' actions. That is, in the context of a bank-run, depositors can observe the length of a queue, respectively the lack thereof, in front of their local bank branch. To understand the equilibria that agents play, we therefore argue that agents' information over the game's payoff coefficients \textit{and} information over each others' actions should be taken into account. %The global games approach of \citet{Rub89}, \citet{Car93} and \citet{Mor98}, focuses only on one of these types of information, i.e. information over fundamentals, and assumes away the other. 

Information over actions plays a dual role in the current model. First, actions depend on the game's payoff coefficient, and signals over actions therefore carry information over the game's payoff coefficients. In this interpretation, the signal over actions is just another signal over the game's fundamental. Taking this view, private information over actions should reinforce the global games mechanism, where private information over fundamentals selects unique equilibria. The second function of signals over actions is that they inform players of each others' actions, which helps coordination. In equilibrium, we find that this second effect dominates, and multiple equilibria are ensured whenever private information over actions is sufficiently precise.

Multiple equilibria obtain in a manner similar to an epidemic. That is, in the context of a bank-run, increases in the mass of agents who line up to withdraw their deposits are observed, and induce additional agents to join the queue. This infectious process allows agents to coordinate on multiple equilibria whenever the precision, with which they observe each other, is sufficiently high. The equilibria of the game are then situations where no agent, given his information over the queue and the game's fundamental, would like to join/leave the queue of agents, who are waiting for the bank to open. In turn, once the bank opens, the agents in the queue withdraw their money.

Technically, equilibria are steady states, where the mass of waiting agents in period $n-1$, $A^{n-1}$, which is partially revealed through agents' private information, coincides with the mass of waiting agents $A^n$, such that $A^n=A^{n-1}=A$. In the main text we will suppress the time index $n$, and solve for steady states only. We do use the time index in appendices \ref{Iteration} and \ref{proof_infogap_mult_1d} to obtain these steady states via a sequential model, where agents join and leave the crowd of agents, who are waiting to withdraw their money, until a steady state is reached.

\emph{Alternative Interpretations:} Riots and revolutions, joint investment projects, or coordinated attacks are problems where actions are strategic complements. Agents will therefore monitor each other closely.  

During riots and revolutions, people see whether the number of protesters in the street is large or small. Bystanders, who observe that turn-out is large, may decide to join the protest. Protesters, who see that turn-out is small, may withdraw. The present model studies the steady state crowd, which gathers \textit{before} the protesters clash with the riot police.\footnote{The January 2021 storming of the US Capitol may serve as an example for this: protesters were gathering at various rallies throughout Washington before they moved to overwhelm the police.} 

Political parties and parliaments often perform nonbinding test-votes.\footnote{One may view such votes as public signals. In an alternative interpretation, one may consider the relevant information as being private in the sense that the voting population of politicians discusses the composition of the aggregate 'yes' and 'no' votes in small circles. That is, politicians discuss in private \textit{who} voted 'yes' and 'no', to understand the outcome of the final, binding, vote.} These test votes can be interpreted as coordination devices, which allow politicians to observe the "queue" of politicians who support a particular motion. After these test-votes, they proceed to the final, binding, vote. 

In the context of large investment projects, investors sign nonbinding letters of intent. These letters serve as signals regarding the future action, i.e., the expected number of investors who will actually participate once the binding investment contract is signed.

\emph{Related literature:} Agents who observe each others' actions learn about the game's fundamental. This learning channel has been studied by \citet{Cha99}, \citet{Pav07,Pav07b}, \citet{Fra12}, \citet{Loe14}. These models assume that agents observe (i) the actions that they took when they played the game in the past, or (ii) the \textit{irreversible}\footnote{That is, agents are partitioned into early and late movers; late movers observing the irreversible actions taken by those who moved early.}\footnote{See \citet{Kov13} for a two period global games model with reversible actions. \citet{Kov13} focus on environments where actions are unobservable, and thus equilibria are unique.} actions of a group of early movers. This type of information over actions is equivalent to information over the game's fundamental. The global games mechanism thus applies, and ensures unique equilibria whenever private information over actions/fundamentals is sufficiently precise. 

The current paper, on the contrary, studies steady states of crowds that form \emph{before} the final, binding, action is taken. In the context of a bank-run, an equilibrium is a steady state queue of agents that forms before the bank opens. That is, an equilibrium is a situation where none of the agents in the queue, given their information over the queue's length and their information over the bank's reserves, would like to withdraw from the queue. At the same time, none of the agents outside of the queue, given their information over the queue's length and their information over the bank's reserves, would like to join the queue.\footnote{In the alternative context of a riot, we study the crowd that gathers \emph{before the actual attack takes place}. On the contrary, \citet{Cha99}, \citet{Pav07,Pav07b}, \citet{Fra12}, \citet{Loe14}, study agents who learn from the outcomes of past attempts to, e.g. storm the US capitol, or from past actions of early movers who have already committed to the attack.}

\citet{Ang06} find that public signals over players' steady state actions, just like public signals over fundamentals, help agents to coordinate on multiple equilibria. The present model shows that private information over actions, unlike private information over fundamentals, induce multiple equilibria.

Modern technology increasingly allows agents to draw on diverse sources of information. Taking this perspective, we contribute to the effort aimed at enriching the global games information structure. \citet{Izm10}, \citet{Ste11} \citet{Kuh15}, \citet{Gra16}, \citet{Ber16}, \citet{Bin01} have recently emphasized the role of heterogenous priors, strategic information revelation and learning, environments where agents receive signals over each others' information. 

%\footnote{That is, agents' can see the number of views that a particular news item has on youtube. In the present context, one may think of agents who view the queue in front of the Northern Rock bank life on youtube.}
%The importance of information over actions has been studied in
%a separate literature on "conjectural equilibria." 

Finally, the current model connects the literature on conjectural equilibrium, \citet{Bat97}, \citet{Min03}, \citet{Rub94}, \citet{Esp13},\footnote{See also \citet{Hah77} for Walrasian economies, where agents hold conjectures over each others' supply and demand functions, which need not be true.} emphasizing noisy information over steady state actions, with arguments from the literature on global games, where uncertainty over actions originates from parameter uncertainty.  

\emph{Organization:} Section \ref{model} outlines the model. Section \ref{MorrisShin} recalls the \citet{Mor04} model. Section \ref{Normal} provides a tractable example of global games featuring private signals over actions. In Section \ref{information_equilibrium} we prove our results for more general signal structures. Section \ref{Discussion} concludes.
 
% ----------------------------------------------------

\section{Model} \label{model}

There is a status quo and a unit measure of agents indexed by
$i\in[0,1]$. Each of these agents $i$ can choose between two
actions $a_i\in\{0,1\}$. Choosing $a_i=1$ means to attack the
status quo. Choosing $a_i=0$ means that the agent abstains from
attacking the status quo. An attack on the status quo is
associated with a cost $c\in(0,1)$. If the attack is successful,
the status quo is abandoned, and attacking agents receive a net
payoff $1-c>0$. If the attack is not successful, an attacking
agent's net payoff is $-c$. The payoff for an agent who does not
attack is normalized to zero. The status quo is abandoned if the
aggregate size of the attack $A:=\int_0^1a_idi$ exceeds the
strength of the status quo $\theta$, i.e., if $A>\theta$.
Otherwise, if $A<\theta$, the status quo is maintained, and the
attack fails.

The timing of the game may be thought of as follows:

\begin{itemize}

\item A group of agents, who observe each other, gathers in front of the bank. An equilibrium is a situation where no agent, given his information over the length of the queue of waiting agents and the game's fundamental, would like to join/leave the queue.

\item The bank opens, and the agents, who are waiting in the queue, withdraw their money.

\end{itemize}

In what follows, we study steady state equilibria, where the past queue of wating agents $A^{n-1}$, which agents observe, coincides with the contemporaneous queue of wating agents $A^n$ such that $A^n=A^{n-1}=A$. That is, in the main text we suppress the time index and solve for steady states only. We do use the time index $n$ in appendices \ref{Iteration} and \ref{proof_infogap_mult_1d} to obtain these steady states via an iteration over the crowd of attacking agents.
Finally, in order to focus on the role of private information, we assume throughout the paper that players hold a uniform uninformative prior over $\theta$. The game, with the exception of the exogenous fundamental $\theta$ and the endogenous size of the attack $A(\theta)$, is common knowledge.

\subsection{The \citet{Mor04} Benchmark: Observing Actions Indirectly}\label{MorrisShin}
Let us briefly recall the \citet{Mor04} benchmark model, where agents receive private information over the game's fundamental only. That is,
agents receive signals
\begin{eqnarray}  x_i=\theta+\sigma_x\epsilon_i, \quad \epsilon_i\sim\mathcal{N}(0,1),  \end{eqnarray}  
which inform them of the game's fundamental $\theta$ with precision $\alpha_x:=\frac{1}{\sigma_x^2}$. Expected utility is thus:
\begin{eqnarray}  E[U(a_i)|x_i]=a_i(P(\theta<A|x_i)-c). \label{MS2} \end{eqnarray}
We denote by $x^*$ the signal threshold that separates signal values $x_i<x^*$, for which agents attack $a_i=1$, from signal values $x_i>x^*$, for which agents do not attack $a_i=0$. Using (\ref{MS2}), we have an indifference condition 
\begin{eqnarray}  P(\theta<A|x^*)=c,  \label{MS3}\end{eqnarray}
which characterizes $x^*$.
Given threshold $x^*$, and given a fundamental $\theta$, the mass of attacking agents is
\begin{eqnarray}  A=P(x<x^*|\theta)=\Phi(\sqrt{\alpha_x}(x^*-\theta)),  \label{MS4}\end{eqnarray} 
where $\Phi()$ denotes the cumulative density function of the standard normal distribution.
Equation (\ref{MS4}), helps to compute threshold values $\theta^*$, which separate fundamental values $\theta<\theta^*$, for which the attack succeeds from fundamental values $\theta>\theta^*$, for which the attack fails:
\begin{eqnarray}  A(\theta^*)=\theta^* \quad \Leftrightarrow \quad \Phi(\sqrt{\alpha_x}(x^*-\theta^*))=\theta^*.  \label{MS5}\end{eqnarray}
Combining (\ref{MS3}) and (\ref{MS5}) we rewrite the payoff indifference condition:
\begin{eqnarray}  P(\theta<\theta^*|x^*)=\Phi(\sqrt{\alpha_x}(\theta^*-x^*))=c.   \label{MS6}\end{eqnarray}
Together (\ref{MS5}) and (\ref{MS6}) yield:
\begin{prop} There exists a unique equilibrium $x^*,\theta^*$. \label{Prop0}\end{prop}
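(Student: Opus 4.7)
The plan is to reduce the existence--uniqueness question to the two scalar equations (\ref{MS5}) and (\ref{MS6}) in the two unknowns $(\theta^*, x^*)$, solve them in closed form, and then invoke the standard global games argument to show that every equilibrium must in fact be of the monotone threshold form that led to these two equations.

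For the explicit solution, I would exploit that (\ref{MS6}) involves only the difference $\theta^* - x^*$. Since $\Phi$ is strictly increasing and $c \in (0,1)$, (\ref{MS6}) is equivalent to
\begin{equation*}
\theta^* - x^* \;=\; \frac{\Phi^{-1}(c)}{\sqrt{\alpha_x}}.
\end{equation*}
Substituting into (\ref{MS5}) and using the identity $\Phi(-y) = 1 - \Phi(y)$ gives $\Phi(-\Phi^{-1}(c)) = \theta^*$, i.e.\ $\theta^* = 1-c$, and hence $x^* = (1-c) - \Phi^{-1}(c)/\sqrt{\alpha_x}$. So the system pins the pair $(\theta^*, x^*)$ down uniquely, and existence plus uniqueness within the class of monotone threshold profiles follow at once.

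The step I expect to be the main obstacle is ruling out non-monotone equilibria, since the derivation leading to (\ref{MS5})--(\ref{MS6}) presumed a cutoff strategy. For this I would follow the standard \citet{Mor04} argument: agents with sufficiently low $x_i$ have strictly dominant action $a_i = 1$ and agents with sufficiently high $x_i$ have strictly dominant action $a_i = 0$, while under the normal signal structure the posterior $P(\theta < A\,|\,x_i)$ is strictly decreasing in $x_i$ whenever opponents use a cutoff rule. Strategic complementarity then lets one iterate the best-response operator from the lower and upper dominance regions; the two iterations converge monotonically to a common cutoff, which by the first step must coincide with the $x^*$ computed above. This collapses the set of surviving strategies onto the single threshold rule and yields the uniqueness claim of Proposition~\ref{Prop0}.
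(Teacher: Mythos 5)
Your proposal is correct and its core is exactly the paper's proof: the paper solves (\ref{MS6}) for $x^*=\theta^*-\Phi^{-1}(c)/\sqrt{\alpha_x}$, substitutes into (\ref{MS5}) to obtain $\theta^*=\Phi(-\Phi^{-1}(c))$ (which equals your $\theta^*=1-c$, a simplification the paper does not carry out), and then recovers $x^*$. Your additional step ruling out non-threshold equilibria via dominance regions and iterated best responses goes beyond the paper, whose derivation of (\ref{MS3})--(\ref{MS6}) simply presupposes cutoff strategies; that completion is the standard \citet{Mor04} argument and is correct, so your proof is, if anything, more complete than the one in the paper.
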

\begin{proof} We solve equation (\ref{MS6}) for $x^*=-\Phi^{-1}(c)\frac{1}{\sqrt{\alpha_x}}+\theta^*.$ Substituting into (\ref{MS5}) yields $\theta^*=\Phi(-\Phi^{-1}(c))$. In turn, $x^*=-\Phi^{-1}(c)\frac{1}{\sqrt{\alpha_x}}+\Phi(-\Phi^{-1}(c)).$ \end{proof}
That is, equilibria are unique regardless of the precision with which agents observe the game's fundamental.

\textit{How well do agents observe each others' actions?} Agents receiving signals $x_i$ over the game's fundamental $\theta$ know that the other agents' equilibrium action is an invertible function of the fundamental $A=\psi(\theta)$ such that $\theta=\psi^{-1}(A)$. Hence, the signal over the game's fundamental informs agents of each others' contemporaneous actions. Rewriting (\ref{MS4}), we have $\theta=-\frac{1}{\sqrt{\alpha_x}}\Phi^{-1}(A)+x^*$. Signals $x_i=\theta+\sigma_x\epsilon_i$ thus carry information over actions $x_i=-\frac{1}{\sqrt{\alpha_x}}\Phi^{-1}(A)+x^*+\sigma_x\epsilon_i$, and we have $(x^*-x_i)\sqrt{\alpha_x}=\Phi^{-1}(A)-\epsilon_i$. That is, information is specified such that the precision with which agents observe actions is independent of the precision $\alpha_x$ with which they observe the game's fundamental.

\subsection{Observing Actions Directly}\label{Normal}
We begin with a simple example, in which agents receive only one private signal over the attack's net size
$A-\theta$, rather than two separate signals over $\theta$ and $A$. In Section \ref{information_equilibrium}, we extend the current result to more general signal structures and distribution functions. Moreover, appendices \ref{Iteration} and \ref{Tipping} derive the (steady state) equilibrium function $A(\theta)$ via an iteration argument in which agents observe past actions.%\footnote{The model with two separate signals over $A$ and $\theta$, and the aspect that agents observe each others' past actions such that $A^{n}=\psi(A^{n-1})$ will be added later.}%\footnote{In Appendix \ref{Tipping} we establish the existence of these steady state functions $A(\theta)$ via an iteration. Moreover, we show that our results carry over to more general distribution functions.}

The signal
\begin{eqnarray}z_i=A-\theta+\sigma_z\epsilon_i, \quad \epsilon_i\sim\mathcal{N}(0,1) \label{i1}\end{eqnarray}
informs players $i$ with precision $\alpha_z:=\frac{1}{\sigma^2_z}$ of the attack's net size
$A-\theta$, and we have:
\begin{prop}
If private information is precise, $\alpha_z>2\pi$, agents
can coordinate on multiple equilibria.\end{prop}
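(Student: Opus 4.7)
The plan is to construct symmetric monotone threshold equilibria and to show that once $\alpha_z>2\pi$ the aggregate consistency equation admits multiple solutions over an open range of fundamentals. First I would conjecture the threshold rule $a_i=1\iff z_i>z^\ast$, since a larger $z_i$ shifts the posterior on $A-\theta$ upward and thus raises the perceived probability of success. Because the signal $z_i=A-\theta+\sigma_z\epsilon_i$ is informative only about the difference $A-\theta$, and because $\theta$ carries an improper uniform prior, the marginal agent's posterior on $A-\theta$ given $z^\ast$ is $\mathcal{N}(z^\ast,\sigma_z^2)$. The indifference condition $P(A>\theta\mid z^\ast)=c$ therefore collapses to $\Phi(\sqrt{\alpha_z}\,z^\ast)=c$, pinning down $z^\ast=\Phi^{-1}(c)/\sqrt{\alpha_z}$ independently of the equilibrium selection.

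Next, given this threshold, the mass that attacks at the realised $(\theta,A)$ equals $P(z_i>z^\ast\mid \theta,A)=\Phi\bigl(\sqrt{\alpha_z}(A-\theta-z^\ast)\bigr)$, so the consistency condition is
$$A=\Phi\!\bigl(\sqrt{\alpha_z}(A-\theta-z^\ast)\bigr).$$
Setting $y:=A-\theta$ and $H(y):=\Phi(\sqrt{\alpha_z}(y-z^\ast))-y$, this becomes $H(y)=\theta$. Since $H'(y)=\sqrt{\alpha_z}\,\phi(\sqrt{\alpha_z}(y-z^\ast))-1$ and $\phi$ attains its maximum $1/\sqrt{2\pi}$ at zero, we have $\max_y H'(y)=\sqrt{\alpha_z/(2\pi)}-1$, which is strictly positive precisely when $\alpha_z>2\pi$. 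In that case $H$ is decreasing at $\pm\infty$ but rises over an interval around $y=z^\ast$, so it possesses a local minimum and a local maximum; for every $\theta$ strictly between these extrema, $H(y)=\theta$ has three roots, each corresponding to an aggregate action $A=y+\theta$ that is consistent with the postulated threshold strategy.

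Finally, because the pinning-down of $z^\ast$ in the first step depended only on the signal being informative about the difference $A-\theta$ — a property that survives any selection $A(\theta)$ from the multi-valued correspondence — two distinct measurable selections (for instance, the pointwise smallest and pointwise largest root of $H(y)=\theta$) produce two genuine symmetric threshold equilibria. The main obstacle is exactly this last step: one must verify that the marginal agent's posterior on the success event $\{A>\theta\}$ does not reintroduce a dependence on the selection $A(\cdot)$ that would break the common threshold $z^\ast$. Thanks to the sufficiency of $A-\theta$ in the likelihood, it does not, and the cutoff $\alpha_z=2\pi$ — the precision at which the best response in $A$ becomes locally self-reinforcing — is exactly where multiplicity sets in.
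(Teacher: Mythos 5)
Your middle step is correct, and it in fact supplies the proof that the paper leaves implicit for its Lemma \ref{lem1}: writing the consistency condition (\ref{A2}) as $H(y)=\theta$ with $H(y):=\Phi(\sqrt{\alpha_z}(y-z^*))-y$, the observation that $\max_y H'(y)=\sqrt{\alpha_z/(2\pi)}-1>0$ exactly when $\alpha_z>2\pi$ yields a local maximum and minimum of $H$ and hence three roots for an open interval of $\theta$. Up to that point you are on the same track as the paper.

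The gap is in your first and last steps. You assert that, because the likelihood of $z_i$ depends on $(\theta,A)$ only through $y=A-\theta$ and the prior on $\theta$ is flat, the marginal agent's posterior on $y$ given $z^*$ is $\mathcal{N}(z^*,\sigma_z^2)$, so that $z^*=\Phi^{-1}(c)/\sqrt{\alpha_z}$ is pinned down independently of the equilibrium selection. This is false: the improper uniform prior lives on $\theta$, not on $y$. Under any selection $A(\cdot)$ the induced prior density of $y=A(\theta)-\theta$ carries the Jacobian factor $\bigl|d\theta/dy\bigr|=1/\bigl(1-A'(\theta)\bigr)$, which is close to $1$ where $A$ is flat but close to $0$ where the selection transitions steeply between branches; it therefore depends on the selection $A_t(\cdot)$ and on $z^*$ itself. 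The posterior on $y$ is proportional to $\phi\bigl(\sqrt{\alpha_z}(z^*-y)\bigr)\cdot\bigl(1-A'(\theta(y))\bigr)^{-1}$, not a normal centered at $z^*$, and the indifference condition does not collapse to a closed form. Sufficiency of $y$ in the likelihood does not make the posterior prior-free, so your closing dismissal of precisely this obstacle does not go through. This is why the paper, instead of solving for $z^*$ explicitly, carries out the Bayes computation with the Jacobian term $f(y)=g(\theta(y))\,\frac{d\theta}{dy}$ in Appendix \ref{A3}, establishes only the limits $P(A_t(\theta,z^*)-\theta>0|z^*)\rightarrow 0$ as $z^*\rightarrow-\infty$ and $\rightarrow 1$ as $z^*\rightarrow+\infty$, and then obtains equilibrium cutoffs either as interior solutions of the indifference condition (\ref{l0}) or as points where that (possibly discontinuous) map jumps across $c$. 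To repair your argument you would need to replace the closed-form cutoff by such an existence argument for $z^*$ for each selection, or else show that the Jacobian distortion preserves the inequalities you need; neither is immediate.
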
\label{Normalprop}
\begin{proof}
We proceed in three steps. First, we compute the threshold signal
$z^*$, for which agents are indifferent between attacking and not
attacking. Given this threshold, we compute the mass of attacking
agents. Third, we show that there exist multiple equilibria.

1.) Payoff indifference condition: Given a signal $z_i$,
agents $i$ choose an action $a_i\in\{0,1\}$, to maximize expected
utility:
\begin{eqnarray} E[U_i]=a_i(P(A-\theta>0|z_i)-c). \end{eqnarray}
Agent $i$ is therefore just indifferent between attacking,
$a_i=1$, and not attacking, $a_i=0$, when he receives a signal
$z_i=z^*$ such that:
\begin{eqnarray}  P(A-\theta>0|z^*)=c. \label{l0.1} \end{eqnarray}
%Given (\ref{i1}), this can be rewritten as:
%\begin{eqnarray} \Phi(\alpha_x(A-\theta+x^*))=c, \label{l0.1}  \end{eqnarray}
%where $\Phi()$ is the cumulative normal distribution.

It follows from (\ref{l0.1}) that agents attack if $z>z^*$, and
they will abstain from attacking whenever $z\leq z^*$.

2.) Given the critical signal $z^*$, we can compute the mass of
attacking agents:
\begin{eqnarray} A=P(z>z^*|A,\theta)  \label{A1}\end{eqnarray}
For normally distributed signal errors, (\ref{A1}) can be
rewritten as:
\begin{eqnarray} A=1-\Phi(\sqrt{\alpha_z}(z^*-A+\theta)), \label{A2}  \end{eqnarray}
where $\Phi()$ is the cumulative normal distribution. From
(\ref{A2}), we have %\textbf{(Picture 1: Cumulative normal
%distribution and slope of its derivative}:
\begin{lemma} \label{lem1} If $\alpha_z>2\pi$ then, for every level $z^*$, there exists an
interval $[\check{\theta}(z^*),\hat{\theta}(z^*)]$ such that
(\ref{A2}) has three solutions $A_j(\theta,z^*), j=1,2,3$ whenever
$\theta\in[\check{\theta}(z^*),\hat{\theta}(z^*)]$.
\end{lemma}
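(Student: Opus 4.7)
I would recast equation (\ref{A2}) as a root-counting problem for a one-dimensional S-shaped function, then extract the threshold $\alpha_z>2\pi$ from the requirement that the function's maximal slope exceed one.

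The first step is the change of variable $u:=A-z^*-\theta$, which turns the argument $z^*-A+\theta$ into $-u$ and, after using $1-\Phi(-x)=\Phi(x)$, rewrites (\ref{A2}) as
\begin{eqnarray} \Phi(\sqrt{\alpha_z}\,u)\;-\;u\;=\;z^*+\theta. \end{eqnarray}
Setting $H(u):=\Phi(\sqrt{\alpha_z}\,u)-u$, solutions in $A$ correspond bijectively to solutions in $u$ of $H(u)=z^*+\theta$, so the task reduces to counting preimages of a single level under $H$.

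The second step is to analyse the shape of $H$. Differentiating gives $H'(u)=\sqrt{\alpha_z}\,\phi(\sqrt{\alpha_z}\,u)-1$. Because $\phi$ is symmetric and attains its maximum at the origin, $H'$ does too, with $H'(0)=\sqrt{\alpha_z/(2\pi)}-1$; this is strictly positive \emph{precisely because} $\alpha_z>2\pi$. The level set $\{H'>0\}$ is therefore a symmetric open interval $(-u_0,u_0)$ with $u_0>0$ the unique positive root of $\sqrt{\alpha_z}\,\phi(\sqrt{\alpha_z}\,u)=1$. Consequently $H$ is strictly decreasing on $(-\infty,-u_0]$, strictly increasing on $[-u_0,u_0]$, and strictly decreasing on $[u_0,\infty)$, with $H(u)\to+\infty$ as $u\to-\infty$ and $H(u)\to-\infty$ as $u\to+\infty$.

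Finally, setting $\check\theta(z^*):=H(-u_0)-z^*$ and $\hat\theta(z^*):=H(u_0)-z^*$, three applications of the intermediate value theorem to the three monotone branches yield three distinct roots of $H(\cdot)=z^*+\theta$ for every $\theta\in(\check\theta(z^*),\hat\theta(z^*))$ and two roots (with the inner one a tangent) at the endpoints; the bijection returns the desired $A_1,A_2,A_3$. The only real point of substance is the derivation of $\alpha_z>2\pi$ as the sharp threshold: geometrically, this says the steepest slope of the S-curve in (\ref{A2}), equal to $\sqrt{\alpha_z/(2\pi)}$, exceeds the slope of the diagonal, which is exactly the condition for three intersections. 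No deeper obstacle arises.
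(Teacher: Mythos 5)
Your proof is correct, and it is precisely the argument the paper leaves implicit: the paper states Lemma \ref{lem1} without any proof (asserting it follows ``from (\ref{A2})''), and your reduction to the S-shaped function $H(u)=\Phi(\sqrt{\alpha_z}\,u)-u$, whose maximal slope $H'(0)=\sqrt{\alpha_z/(2\pi)}-1$ is positive exactly when $\alpha_z>2\pi$, together with the three applications of the intermediate value theorem on the monotone branches, is the standard fixed-point/slope reasoning the paper relies on. One small point, which you already flag honestly: at $\theta=\check{\theta}(z^*)$ and $\theta=\hat{\theta}(z^*)$ there are only two distinct solutions (the inner one a tangency), so three distinct solutions obtain on the open interval $(\check{\theta}(z^*),\hat{\theta}(z^*))$ --- the closed-interval phrasing is a minor imprecision of the lemma's statement, not a gap in your argument.
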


To construct equilibrium functions $A(\theta,z^*)$, we use the
solutions $A_j(\theta;z^*)$ from Lemma \ref{lem1}. More
specifically, we focus on the solutions $j=1,3$ such that we
obtain functions $A_t(\theta;z^*)$, which are downward sloping
$\frac{\partial A_t}{\partial \theta}<0$. Given these functions,
it remains to show that there exist values $z^*$ that satisfy the
payoff indifference condition
\begin{eqnarray}  P(A_t(\theta,z^*)-\theta>0|z^*)=c. \label{l0}
\end{eqnarray} In Appendix \ref{A3} we show that there exist small values $\check{z}$ such that
$P(A_t(\theta,\check{z})-\theta>0|\check{z})>c$ and large values
$\hat{z}$ such that $P(A_t(\theta,\hat{z})-\theta>0|\hat{z})<c$.

Equilibrium values $z^*$ now either obtain as solutions to (\ref{l0}), or the critical $z^*$
values are those values where the function
$P(A_t(\theta,z^*)-\theta>0|z^*)$ is discontinuous in $z^*$. In
that case we have a $z^*$, such that for small $\delta>0$,
$P(A_t(\theta,z^*+\delta)-\theta>0|z^*+\delta)>c$ and
$P(A_t(\theta,z^*-\delta)-\theta>0|z^*-\delta)<c$. That is, the
expected value of attacking/not attacking changes discontinuously
at the agents' equilibrium cutoff value $z^*$.
\end{proof}

Intuitively, multiple equilibria obtain since increases in the mass of attacking agents are observed. This, in turn, increases the number of attacking agents, which is again visible and induces even more agents to run... . Hence, if the private signal's precision is sufficiently high, runs feed on themselves, and agents can coordinate on multiple equilibria.

%\begin{center}
%\begin{tikzpicture}[scale=1.4]
%\begin{axis}[%
%xlabel={$A$},ylabel={$A(\theta,A)$},xmin=0,xmax=1,ymin=0,ymax=1]%,extra x ticks={.5},extra y ticks={0.5}]
%\addplot[smooth,red]   gnuplot{\cdf(x)(0.45)(.25)};
%\addplot[smooth,blue]  gnuplot{\cdf(x)(0.5)(.25)};
%\addplot[smooth,green] gnuplot{\cdf(x)(0.55)(.25)};
%\addplot[smooth,black] gnuplot{x};
%\end{axis}
%\end{tikzpicture}
%\end{center}

%\begin{center}
%	\begin{tikzpicture}[scale=1.4]
%	\begin{axis}[%
%	xlabel={$A$},ylabel={$A(\theta,A)$},xmin=0,xmax=1,ymin=0,ymax=1]%,extra x ticks={.5},extra y ticks={0.5}]
%	\addplot[smooth,red]   gnuplot{\cdf(x)(0.45)(1)};
%	\addplot[smooth,blue]  gnuplot{\cdf(x)(0.5)(1)};
%	\addplot[smooth,green] gnuplot{\cdf(x)(0.55)(1)};
%	\addplot[smooth,black] gnuplot{x};
%	\end{axis}
%	\end{tikzpicture}
%\end{center}

\section{Separate Signals over Actions and Fundamentals}\label{information_equilibrium}

Let us now assume that each player $i$ receives two pieces of private information: a noisy signal $x_i$ over the strength of the status quo $\theta$, and another signal $y_i$ over the other players' (steady state) actions $A$:
\begin{equation} \label{def_signals_x_and_y}
\begin{array}{lrcl}
\text{Signal over the fundamental: } \qquad\qquad & x_i &=& \theta + \epsilon^x_i \\
\text{Signal over the aggregate attack: } \qquad\qquad &  y_i &=& A + \epsilon^y_i
\end{array}
\end{equation}

Agents choose actions $a_i$ to maximize: 
\begin{eqnarray} E[U(a_i)|x_i,y_i]=a_i(P(\theta<A|x_i,y_i)-c). \end{eqnarray}
Action $a_i=1$ is thus optimal whenever $P(\theta<A|x_i,y_i)\geq c$. Agents are just indifferent between attacking and not attacking when signals $x_i$ and $y_i$ are such that $P^*:=c$. We denote the error terms' joint density function by $f(\epsilon^x,\epsilon^y)$. Given agents' information, and the critical probability $P^*$, we can define:  
\begin{mydef}[Equilibrium] \label{fe}
An aggregate attack function $A(\theta)$ is an equilibrium of the game, if for all $\theta\in\mathbf{R}$ the following holds:\footnote{$\chi$ denotes the indicator function.}
\begin{equation}\label{eq_consistency}
A(\theta) = \int_{\mathbf{R}^2} \chi_{\{(\epsilon^x,\epsilon^y):P[A(\theta)\geq\theta|x_i=\theta+\epsilon^x,y_i=A(\theta)+\epsilon^y,A(\cdot)]\geq P^*\}} \, f(\epsilon^x,\epsilon^y) \, d\epsilon^x d\epsilon^y.
\end{equation}
\end{mydef}

To characterize equilibria, we treat cases where error terms are bounded or unbounded separately:
\begin{prop}\label{cs_mult}
Suppose that $\epsilon^y_i\in[-\sigma,\sigma]$, i.e. $f(\epsilon^x,\epsilon^y)=0$ for all $|\epsilon^y|>\sigma$ and all $\epsilon^x\in\mathbf{R}$. Further assume that the precision of the signal about the aggregate attack $y_i$ is precise enough, i.e. $0<\sigma<\frac12$. Then, there exists a continuum of equilibria.
\end{prop}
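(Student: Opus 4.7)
The plan is to exhibit an explicit one-parameter family of bang-bang equilibria indexed by a threshold $\theta^*\in[0,1]$. For each such $\theta^*$, I would propose the candidate
\begin{equation*}
A(\theta)=\begin{cases}1 & \text{if } \theta\leq \theta^*,\\ 0 & \text{if } \theta>\theta^*,\end{cases}
\end{equation*}
and verify that it satisfies the consistency condition~(\ref{eq_consistency}). The driving observation is that $\sigma<\tfrac12$ forces the two possible supports $[-\sigma,\sigma]$ and $[1-\sigma,1+\sigma]$ of $y_i=A+\epsilon^y_i$ to be disjoint, so whenever $A(\theta)\in\{0,1\}$ the signal $y_i$ pins down the realized value of $A$ without ambiguity. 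This reduces the fixed-point problem to checking two scalar inequalities.

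The verification splits into two cases. If the realized state satisfies $\theta\leq\theta^*$, then the true aggregate attack is $A=1$, every agent observes $y_i\in[1-\sigma,1+\sigma]$ and, using the candidate $A(\cdot)$, infers $A=1$ and hence $\theta\leq\theta^*\leq 1$; the posterior probability of regime change is $P[A\geq\theta\mid y_i]=1\geq c$, so every agent attacks and the induced mass of attackers equals $1=A(\theta)$. If instead $\theta>\theta^*$, the true aggregate attack is $A=0$, every agent observes $y_i\in[-\sigma,\sigma]$, infers $A=0$, and deduces $\theta>\theta^*\geq 0=A$; the posterior probability of success is $0<c$, no agent attacks, and the induced mass equals $0=A(\theta)$. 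The signal $x_i$ never needs to be invoked because $y_i$ alone fixes the relevant conditional event.

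Since the argument goes through for every $\theta^*\in[0,1]$ and distinct thresholds produce distinct equilibrium functions $A(\cdot)$, one obtains a continuum of equilibria. The only delicate point I would anticipate is the single state $\theta=\theta^*$, where the jump in $A(\cdot)$ makes the posterior on $A$ formally ambiguous from $y_i$ alone; but this is a measure-zero event in the integral on the right-hand side of~(\ref{eq_consistency}) and can be resolved by fixing the value of $A$ at the jump by convention. No balancing of the joint density $f(\epsilon^x,\epsilon^y)$ or tail conditions on $\epsilon^x$ are needed—the proposition's force comes entirely from the disjointness of the two supports of $y_i$, which is exactly what $\sigma<\tfrac12$ guarantees.
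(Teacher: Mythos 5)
Your proof is correct and takes essentially the same route as the paper's: both exhibit the one-parameter family of step-function equilibria indexed by the jump point $t$ (the paper uses $A_t(\theta)=1$ for $\theta<t$, $0$ otherwise), and both rest on the observation that $\sigma<\tfrac12$ makes the supports $[-\sigma,\sigma]$ and $[1-\sigma,1+\sigma]$ of $y_i$ disjoint, so $y_i$ alone reveals $A$ and hence whether the attack succeeds, after which consistency with (\ref{eq_consistency}) is checked case by case exactly as you do. Your worry about the jump state is immaterial and is resolved, as you suggest, by the convention chosen at $\theta=t$.
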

\begin{proof} See Appendix \ref{A222} \end{proof}

\begin{prop}\label{infogap_mult_general}
Assume that $\epsilon^x_i$ and $\epsilon^y_i$ are distributed according to pdf $f=f_x f_y$ (cdfs $F_x$, $F_y$), where $f_x$ and $f_y$ are symmetric.\footnote{The symmetry assumption shortens the proof.} There exist $\delta> 0, \gamma> 0,$ and $\xi > 0$ such that $1-\delta \geq \gamma$, $1>3\delta + 2\gamma$ and the following conditions hold:
\begin{align}
&\frac{F_x(\xi)}{1-F_x(\xi)} \, \frac{\sup_{a\in[0,\delta]}f_y(\eta-a)}{\inf_{a\in [1-\delta,1]}f_y(\eta-a)} \leq \frac{1-c}{c} \quad \text{ for all }\eta\geq 1-\delta-\gamma \label{cond1_2d}\\
&F_x(\xi)F_y(\gamma)\geq 1- \delta \label{cond2_2d} %\text{, and} \\
%&g(\delta-\gamma) < 1\label{cond3_2d}
\end{align}
Whenever these conditions hold, there exists a continuum of equilibria.
\end{prop}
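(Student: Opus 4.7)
The plan is to exhibit a one-parameter family of ``tipping'' equilibria: for each attack threshold $\theta^*$ in a suitable open interval, I would construct a conjecture $A_{\theta^*}(\cdot)$ that takes values in $[1-\delta,1]$ for $\theta$ well below $\theta^*$ and in $[0,\delta]$ for $\theta$ well above $\theta^*$, verify that it is a valid equilibrium in the sense of Definition \ref{fe}, and then vary $\theta^*$ to obtain the continuum.

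First I would fix $\theta^*\in(\delta,\,1-\delta-\gamma)$ (so that $A>\theta$ holds throughout the conjectured ``success'' region and $A<\theta$ throughout the ``failure'' region) and propose the cutoff decision rule ``attack iff $x_i\leq\theta^*+\xi$ and $y_i\geq 1-\delta-\gamma$'' together with the candidate aggregate attack $A(\theta)\in[1-\delta,1]$ for $\theta\leq\theta^*$, $A(\theta)\in[0,\delta]$ for $\theta\geq\theta^*+\gamma$, and $A(\theta)$ defined on the transition region $[\theta^*,\theta^*+\gamma]$ by a fixed-point step below. The choice of cutoffs $\xi$ and $1-\delta-\gamma$ is designed to align with the quantities appearing in (\ref{cond1_2d})-(\ref{cond2_2d}).

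Next I would verify the best-response property. Applying Bayes with the uniform improper prior on $\theta$ and the (nearly) bang-bang structure of the conjectured $A(\cdot)$, the posterior $P(\theta<A\mid x_i,y_i)$ reduces to a comparison between the hypotheses $H_S:\theta\leq\theta^*,\,A\in[1-\delta,1]$ (attack succeeds) and $H_F:\theta\geq\theta^*+\gamma,\,A\in[0,\delta]$ (attack fails). The attack-optimal condition $P(\theta<A\mid x_i,y_i)\geq c$ rearranges, using symmetry of $f_x$ and $f_y$, into the worst-case likelihood-ratio inequality $\tfrac{F_x(x_i-\theta^*)}{1-F_x(x_i-\theta^*)}\cdot\tfrac{\sup_{a\in[0,\delta]}f_y(y_i-a)}{\inf_{a\in[1-\delta,1]}f_y(y_i-a)}\leq\tfrac{1-c}{c}$, whose left-hand side is controlled by (\ref{cond1_2d}) as soon as $x_i-\theta^*\leq\xi$ and $y_i\geq 1-\delta-\gamma$. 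For consistency with Definition \ref{fe}, the realized mass of attackers at state $\theta$ factors as $F_x(\theta^*+\xi-\theta)\cdot F_y\!\bigl(A(\theta)-(1-\delta-\gamma)\bigr)$; on $\{\theta\leq\theta^*\}$ this is at least $F_x(\xi)F_y(\gamma)\geq 1-\delta$ by (\ref{cond2_2d}), and the symmetric argument on $\{\theta\geq\theta^*+\gamma\}$ uses the inequalities $1-\delta\geq\gamma$ and $1>3\delta+2\gamma$ to bound the attack mass by $\delta$. In the transition region, the intermediate-value theorem applied to the self-map $a\mapsto F_x(\theta^*+\xi-\theta)F_y(a-(1-\delta-\gamma))$ on $[0,1]$ selects $A(\theta)$. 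Because $\theta^*$ may be varied continuously over an interval without disturbing any of these bounds, distinct choices produce distinct equilibria.

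The main obstacle is the Bayesian step: cleanly translating (\ref{cond1_2d}) into a best-response statement while controlling the posterior's dependence on the transition region $[\theta^*,\theta^*+\gamma]$, where the bang-bang approximation of $A(\cdot)$ is weakest. The scale-separation hypotheses $1-\delta\geq\gamma$ and $1>3\delta+2\gamma$ on the parameters are, I expect, exactly what is needed to ensure that the transition region contributes negligibly to the posterior, so that the two-hypothesis reduction above furnishes a valid upper bound on the adverse likelihood ratio.
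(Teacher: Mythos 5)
Your construction has two genuine gaps, both of which the paper's own proof is specifically engineered to avoid. First, the consistency step fails: you fix the decision rule ``attack iff $x_i\leq\theta^*+\xi$ and $y_i\geq 1-\delta-\gamma$'' and define $A(\theta)$ as the attack mass it induces, but Definition \ref{fe} requires $A(\theta)$ to equal the mass of agents whose \emph{posterior} satisfies $P[A(\theta)\geq\theta\,|\,x_i,y_i,A(\cdot)]\geq P^*$ — the attack set is endogenous, not chosen. Conditions \eqref{cond1_2d}--\eqref{cond2_2d} only deliver one-sided implications: signals inside your rectangle make attacking optimal, and that rectangle carries mass at least $1-\delta$ (this is exactly parts 1 and 3 of the paper's Lemma \ref{helper_lemma_2d}); they do \emph{not} imply that signals outside the rectangle make abstaining optimal (an agent with $x_i$ slightly above $\theta^*+\xi$ but $y_i$ close to $1$ will typically still attack). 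Hence the true best-response mass strictly exceeds $F_x(\theta^*+\xi-\theta)F_y\bigl(A(\theta)-(1-\delta-\gamma)\bigr)$, and your IVT fixed point does not satisfy \eqref{eq_consistency}. The paper resolves this by never writing the attack set in closed form: it iterates the exact best-response operator \eqref{xyz} starting from a two-valued step function, uses Lemma \ref{helper_lemma_2d} only to show that the bands $[1-\delta,1]$ (for $\theta< t$) and $[0,\delta]$ (for $\theta\geq t$) are preserved under iteration, and then obtains an equilibrium as a limit via Arzel\`a--Ascoli and continuity of the best-response operator.

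Second, your transition region $[\theta^*,\theta^*+\gamma]$, on which $A$ takes values in $(\delta,1-\delta)$, is fatal to the Bayesian step rather than a negligible nuisance. Condition \eqref{cond1_2d} controls $f_y(\eta-a)$ only for $a\in[0,\delta]\cup[1-\delta,1]$; nothing bounds the posterior weight contributed by states $\theta$ with $A(\theta)$ intermediate. In the motivating case of Example \ref{example} ($\alpha_y\rightarrow\infty$), an agent with $y_i$ near your cutoff $1-\delta-\gamma$ has posterior concentrated precisely on the set where $A(\theta)\approx y_i$, i.e.\ on your transition region, where the success/failure status of the attack is itself ambiguous — so the two-hypothesis likelihood-ratio reduction cannot be salvaged there. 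Your hope that $1-\delta\geq\gamma$ and $1>3\delta+2\gamma$ make this region negligible is unfounded: in the paper these restrictions serve only to make the interval $t\in[\delta+\gamma,1-\delta-\gamma]$ nondegenerate and to guarantee $A>\theta$ on the high band and $A<\theta$ on the low band; they carry no information about densities at intermediate attack levels. The paper's equilibria avoid the issue entirely because they jump at $t$: every iterate, and hence the limit, takes values only in $[0,\delta]\cup[1-\delta,1]$, which is exactly the range where \eqref{cond1_2d} has bite.
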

\begin{proof} See Appendix \ref{proof_infogap_mult_general}.\end{proof}
The proof of Proposition \ref{infogap_mult_general} relies on an iteration argument: we start with a guess $A^0_t(\theta)$ and compute a best response $A^1_t(\theta)$. In turn, $A^1_t(\theta)$ yields another best response $A^2_t(\theta)$ and so on... . Via this iteration we construct a converging sequence of aggregate attacks/best responses. In the limit, we obtain an equilibrium, steady state, attack function for every $t$. Finally, since there is a continuum of permissible $t$ values, we have a continuum of equilibria. In particular, if error terms are normally distributed, we have %The complete proof is relegated to Appendix \ref{proof_infogap_mult_general}.

%To interpret conditions (\ref{cond1_2d}) and (\ref{cond2_2d}), we use an 

\begin{example}\label{example}
Suppose error terms $\epsilon^x,\epsilon^y$ are normally distributed, and denote the precisions of the respective signals by $\alpha_x=\frac{1}{\sigma^2_x}$ and $\alpha_y=\frac{1}{\sigma^2_y}$. For that case, we have $f_x(\xi)=\phi(\sqrt{\alpha_x}\xi)$ and $F_x(\xi)=\Phi(\sqrt{\alpha_x}\xi)$ and $f_y(\gamma)=\phi(\sqrt{\alpha_y}\gamma)$ and $F_y(\gamma)=\Phi(\sqrt{\alpha_y}\gamma)$, where $\phi$ and $\Phi$ represent the density and cumulative density functions of the standard normal distribution. In turn, we can choose $\delta$ and $\gamma$ such that $1>3\delta+2\gamma$, e.g., $\delta=.2$ $\gamma=.1$. Moreover, we can choose $\alpha_x\xi>\Phi^{-1}(.8)$. Finally we let $\alpha_y\rightarrow\infty$, such that (\ref{cond1_2d}) and (\ref{cond2_2d}) are both satisfied.\end{example}

\section{Discussion}\label{Discussion} 

The basic global games approach to bank-runs assumes that agents cannot observe the queues that form before the bank opens. The only information that agents posses concerns the bank's financial strength. In turn, each agent has to use his own signal over the bank's strength to infer the signals, and thus the actions, of the other agents. This indirect reasoning makes coordination difficult. 

The current model emphasizes that bank-runs, riots and revolutions, or joint investment projects, are phenomena where agents monitor each other closely. That is, queues in front of a bank are observable, and induce additional depositors to withdraw their money. This infectious process ensures multiple equilibria whenever the precision with which agents observe each other is sufficiently high. Equilibria in our model are then situations where no agent, given his information over the length of the queue of waiting agents and the game's fundamental, would like to join/leave the queue.

Our model's comparative statics accommodate a range of environments that differ regarding the information that agents have over each others' actions and over fundamentals. Equilibrium uniqueness obtains in a manner similar to a global games setting whenever information over actions is of low quality. On the contrary, if private information over actions (and thus the game's fundamental) is precise, agents can coordinate on multiple equilibria. These equilibria are quite different from what basic global games predict. Hence, we argue that the current model helps to better understand the equilibria that agents play in environments where it is reasonable to assume that they can observe each other.

%In the current model accommodate a range of environments that vary regarding the information that agents have over each others' actions and over fundamentals. Equilibrium uniqueness obtains in a manner similar to a global games setting whenever information over actions is of low quality. On the contrary, if private information over actions, and thus the game's fundamental, is precise, agents can coordinate on multiple equilibria. These equilibria are quite different from what the basic global games model would predict. Taking this view, the current extension of the global games model helps to better understand environments where agents can observe each others' actions.

%In the context of bank-runs, riots and revolutions, or joint investment projects, agents will arguably monitor each others' actions closely, and the current model predicts that precise private information brings back multiplicity. The game where private information over actions is almost perfect thus features multiple equilibria just like the game under common knowledge, where agents know each others' actions.  

\newpage
\appendix

% ----------------------------------------------------

\section{Conditional Probabilities}\label{A3}
We have to show that, $\lim_{z^*\rightarrow -\infty}
P(A(\theta,z^*)-\theta>0|z^*)=0$ and $\lim_{z^*\rightarrow \infty}
P(A(\theta,z^*)-\theta>0|z^*)=1$. To do so, we recall that
$z_i=A-\theta+\sigma_z\epsilon_i$ with $\epsilon_i\sim\mathcal{N}(0,1)$, and
examine the conditional probability:
\begin{eqnarray}  P(A(\theta,z^*)-\theta>0|z^*)=c. \label{l01} \end{eqnarray}
We begin by defining $y(\theta,z^*):=A(\theta,z^*)-\theta$, and we
recall Bayes's formula:
\begin{eqnarray}  f(y|z^*)=\frac{h(z^*|y)f(y)}{\int_{-\infty}^{\infty} h(z^*|y)f(y)dy}, \label{l02} \end{eqnarray}
%and note that:
%\begin{eqnarray}  P(y>0|x)=\int_0^{\infty} f(y|x)dy. \label{l03}
%\end{eqnarray} Combining (\ref{l01})-(\ref{l03}), we have:
%\begin{eqnarray}  P(y>0|x)=\int_0^{\infty} \frac{h(x|y)f(y)}{\int_{-\infty}^{0} h(x|y)f(y)dy+\int_{0}^{\infty} h(x|y)f(y)dy}dy, \label{l04} \end{eqnarray}
%respectively
%\begin{eqnarray}  P(y>0|x)=\frac{1}{1+\frac{\int_{-\infty}^{0} h(x|y)f(y)dy}{\int_{0}^{\infty} h(x|y)f(y)dy}}, \label{l05} \end{eqnarray}
Where $h(z^*|y)$ is a normal distribution. Moreover, we note that:
\begin{eqnarray} f(y(\theta))=g(\theta(y))\frac{d \theta}{d y}. \label{l06} \end{eqnarray}
Where $\frac{d\theta}{dy}=\frac{1}{A_{\theta}(\theta,z^*)-1}$.
Recalling (\ref{A2}), $A=1-\Phi(\sqrt{\alpha_z}(z^*-A+\theta))$,
we have
$A_{\theta}(\theta,z^*)=\frac{-\sqrt{\alpha_z}\phi(\sqrt{\alpha_z}(z^*-A+\theta))}{1-\sqrt{\alpha_z}\phi(\sqrt{\alpha_z}(z^*-A+\theta))}<0$,
for $A_j, j=1,3$. Agents hold a uniform uninformative prior over
$\theta$, such that $g(\theta)$ is a constant $\bar{g}$. Moreover,
we have $\lim_{z^*\rightarrow \infty}A_{\theta}=0$ and thus
$\lim_{z^*\rightarrow \infty}\frac{d\theta}{dy}=-1$ and
$f(y)=-1\bar{g}$. Substituting into (\ref{l02}) yields:
\begin{eqnarray}  \lim_{z^*\rightarrow \infty}f(y|z^*)=\lim_{z^*\rightarrow \infty}\frac{h(z^*|y)}{\int_{-\infty}^{\infty} h(z^*|y)dy}, \label{l02a} \end{eqnarray}
where $h(z^*|y)=\phi(\sqrt{\alpha_z}(z^*-y))$. Finally, we have:
\begin{eqnarray}  \lim_{z^*\rightarrow \infty}P(y>0|z^*)=\lim_{z^*\rightarrow \infty}\frac{\int_{0}^{\infty}h(z^*|y)}{\int_{-\infty}^{\infty} h(z^*|y)dy}=\lim_{z^*\rightarrow \infty}\Phi(\alpha_z z^*)=1. \label{l02b} \end{eqnarray}
The same argument can be made to show that $\lim_{z^*\rightarrow
-\infty} P(y>0|z^*)=0$.

\section{Proof of Proposition \ref{cs_mult}}\label{A222}
\begin{proof}
Pick an arbitrary $t\in[0,1]$ and define
\begin{equation}\label{Attheta}
A_t(\theta) := \left\{\begin{array}{ll}
1 \qquad\qquad & \theta < t \\
0 \qquad\qquad & \text{otherwise.}
\end{array} \right.
\end{equation}
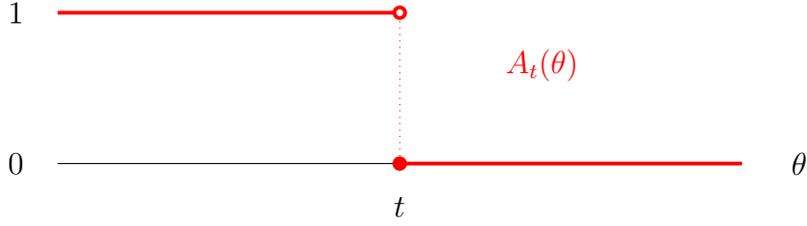
\begin{figure}
\begin{center}
	\begin{tikzpicture}[xscale=1]
	\draw (0,0) -- (9,0);
%	\draw [-](4.4,2) -- (4.6,2);
	\draw [line width=0.5mm,color=red](0,2) -- (4.45,2);
	\draw [line width=0.5mm,color=red](4.5,2) circle (2pt);
	\draw [line width=0.5mm,color=red](4.55,0) -- (9,0);
	\filldraw [line width=0.5mm,color=red](4.5,0) circle (2pt);
%	\draw (4.5,-.2) -- (4.5, .2);
	\draw [dotted, color=red](4.5,.1) -- (4.5, 1.9);
	\node [left] at (10,0) {$\theta$};
	\node[align=center, below] at (4.5,-.3){$t$};
	\node[align=center, left] at (-.3,0){$0$};
	\node[align=center, left] at (-.3,2){$1$};
	\node[align=center, left,color=red] at (7,1.3){$A_t(\theta)$};
	\end{tikzpicture}
\end{center}
\caption{Perfect coordination} \label{Diagram2}
\end{figure}
%\begin{center} Figure \ref{Diagram2} \end{center}
To establish that $A_t$ functions (\ref{Attheta}) are equilibria, we have to show that \eqref{eq_consistency} holds. First, we determine the distribution of signal realizations $y_i$ for a given $\theta$. There are two cases:
\begin{equation}\label{cutoff}
\begin{array}{ll}
\text{Case } \theta<t: & \qquad y_i = 1 + \epsilon^y_i > 1-\sigma \geq \frac12\\
\text{Case } \theta\geq t: & \qquad y_i = 0 + \epsilon^y_i < \sigma \leq \frac12.
\end{array}
\end{equation}
Notice, that signal realizations $y_i$ do not overlap: when $\theta<t,$ all signal realizations are above $\frac12$. On the contrary, whenever $\theta>t,$ all signal realizations are below $\frac12$.
\begin{figure}
\begin{center}
	\begin{tikzpicture}[xscale=1]
	\draw [<->](0,0) -- (9,0);
	\draw (4.5,-.4) -- (4.5, .2);
	\node [left] at (10,0) {$y_i$};
	\node[align=center, above] at (4.5,.3){$\frac12$};
	\draw[decoration={brace,mirror,raise=5pt},decorate]
	(0.1,0) -- node[below=6pt] {$\theta\geq t$} (4.5,0);
	\draw[decoration={brace,mirror,raise=5pt},decorate]
(4.5,0) -- node[below=6pt] {$\theta<t$} (8.9,0);
	\end{tikzpicture}
\end{center}
\caption{Signal cutoff value}\label{Diagram3}
\end{figure}
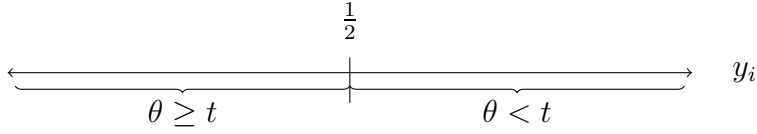
Now suppose that an agents learns that his signal $y_i$ is larger than $\frac12$: then he knows that this is only possible when $\theta<t$. In turn, $\theta<t$ means that a successful attack is underway, which he should join. Indeed, using the conjecture $A_t(\theta)$, we know that all agents will attack as in Figure \ref{Diagram2}.

If the agent's signal $y_i$ is smaller than $\frac12$, he knows $\theta<t$. That is, given the conjectured aggregate attack $A_t(\theta)$, the attack is not successful. Hence, it is optimal for the agent to abstain from attacking.

Equation (\ref{eq_consistency}) holds for all $\theta$: when $\theta<t$ all agents attack, and the aggregate attack is 1. In the other case, no agent attacks and the aggregate attack is 0. Thus, $A_t(\theta)$ satisfies the requirements of an equilibrium.
\end{proof}

\section{Proof of Proposition \ref{infogap_mult_general}}\label{proof_infogap_mult_general}
We begin with a lemma, which collects a number of useful properties. In Section \ref{Iteration} we use these properties, together with the theorem of Arzelà–Ascoli, to show that there exists a continuum of equilibria $A_t(\theta)$. Regarding notation $A^n_t(t^-)$ and $A^n_t(t^+)$  denote left and right limits of the function $A_t(\theta)$ shown in Figure \ref{Diagram3}. Moreover, index $n$ accounts for the n-th iteration over the agents' action $A_t$, and will be used in Section \ref{Iteration}.

\begin{lemma} \label{helper_lemma_2d}
Suppose that for a given $t$ $A(\theta) \in [1-\delta,1]$ if $\theta < t$, and $A(\theta) \in [0,\delta]$ otherwise. Then the following holds:
\begin{enumerate}
\item If $y_i \geq 1-\delta-\gamma$ and $x_i\leq t+ \xi$, then $P[\theta<t|A(\cdot),(x_i,y_i)] \geq c$.
\item If $y_i \leq \delta+\gamma$ and $x\geq t- \xi$, then $P[\theta\geq t|A(\cdot),(x_i,y_i)] \geq c$.
\item $P[y_i\geq 1-\delta-\gamma \cap x\leq t+ \xi|A(\cdot),\theta < t]\geq 1-\delta$
\item $P[y_i\leq \delta+\gamma \cap x> t- \xi|A(\cdot),\theta \geq t]\geq 1-\delta$
\end{enumerate}
\end{lemma}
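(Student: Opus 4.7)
The four items split into two pairs: (1)--(2) are posterior-probability statements, to be handled by a Bayes odds-ratio calculation under the improper uniform prior, while (3)--(4) are prior-probability statements, to be handled directly using independence and symmetry of the noise terms. Within each pair the two parts are mirror images under $(\theta,A)\mapsto(-\theta,1-A)$, so I would write out (1) and (3) in detail and then deduce (2) and (4) by invoking the symmetry of $f_x$ and $f_y$.

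For item (1), the uniform prior together with independence $f=f_x f_y$ gives
$$\frac{P[\theta\geq t\mid x_i,y_i]}{P[\theta<t\mid x_i,y_i]}=\frac{\int_t^\infty f_x(x_i-\theta)\,f_y(y_i-A(\theta))\,d\theta}{\int_{-\infty}^t f_x(x_i-\theta)\,f_y(y_i-A(\theta))\,d\theta}.$$
On $\theta\geq t$ the hypothesis $A(\theta)\in[0,\delta]$ lets me bound the numerator above by $\sup_{a\in[0,\delta]}f_y(y_i-a)\cdot\int_t^\infty f_x(x_i-\theta)\,d\theta$; on $\theta<t$ the hypothesis $A(\theta)\in[1-\delta,1]$ lets me bound the denominator below by $\inf_{a\in[1-\delta,1]}f_y(y_i-a)\cdot\int_{-\infty}^t f_x(x_i-\theta)\,d\theta$. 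The substitution $u=x_i-\theta$ evaluates the one-dimensional $f_x$-integrals to $F_x(x_i-t)$ and $1-F_x(x_i-t)$, and monotonicity of $F_x$ together with $x_i\leq t+\xi$ turns their ratio into $F_x(\xi)/(1-F_x(\xi))$. Applying (\ref{cond1_2d}) at $\eta=y_i\geq 1-\delta-\gamma$ then bounds the odds ratio by $(1-c)/c$, which rearranges to $P[\theta<t\mid x_i,y_i]\geq c$.

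For item (3), I would condition on any specific $\theta<t$. The hypothesis $A(\theta)\geq 1-\delta$ gives the inclusion $\{\epsilon^y_i\geq -\gamma\}\subseteq\{y_i\geq 1-\delta-\gamma\}$, and symmetry of $f_y$ about zero gives $P[\epsilon^y_i\geq-\gamma]=F_y(\gamma)$. Similarly $t-\theta>0$ yields $\{\epsilon^x_i\leq\xi\}\subseteq\{x_i\leq t+\xi\}$, an event of probability $F_x(\xi)$. Independence of $\epsilon^x_i$ and $\epsilon^y_i$ then yields a joint lower bound of $F_x(\xi)\,F_y(\gamma)$, which is at least $1-\delta$ by (\ref{cond2_2d}).

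The main organizational difficulty will be keeping track of which side the sup and which side the inf belong on in the odds-ratio bound, and confirming that under the symmetry $(\theta,A)\mapsto(-\theta,1-A)$ the single condition (\ref{cond1_2d}) also covers case (2)---i.e., that $\sup_{a\in[0,\delta]}f_y(\eta-a)$ and $\inf_{a\in[1-\delta,1]}f_y(\eta-a)$ swap roles under $\eta\mapsto 1-\eta$ because $f_y$ is symmetric. Once that bookkeeping is in place, items (2) and (4) follow mutatis mutandis from the arguments for (1) and (3) with all inequalities flipped.
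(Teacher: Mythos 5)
Your proposal is correct and takes essentially the same route as the paper's proof: for parts (1)--(2) a posterior odds-ratio bound that splits off $\sup_{a\in[0,\delta]}f_y(\,\cdot\,-a)/\inf_{a\in[1-\delta,1]}f_y(\,\cdot\,-a)$ and reduces the $f_x$-integrals to $F_x(x_i-t)/\bigl(1-F_x(x_i-t)\bigr)\leq F_x(\xi)/\bigl(1-F_x(\xi)\bigr)$, invoking \eqref{cond1_2d}; and for parts (3)--(4) event inclusion, independence, symmetry, and \eqref{cond2_2d}, with the mirrored cases handled exactly by the sup/inf swap under $\eta\mapsto 1-\eta$ that you identify. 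The only presentational difference is that the paper formalizes the posterior under the improper uniform prior as a limit of proper uniform priors on $[t-N,t+N]$ (its $\kappa_N$ construction), which your directly written odds ratio implicitly presupposes.
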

\begin{proof}
\begin{enumerate}
\item To prove the first statement, we define
$$\kappa_N := \frac{\int_{t}^{t+N} f_x(x_i-\theta)f_y(y_i-A(\theta))\, d\theta}{\int_{t-N}^t f_x(x_i-\theta)f_y(y_i-A(\theta))\, d\theta}\, ,$$
and show that the first two conditions can be reduced to statements about $\kappa_N$. We denote by $\mathcal{U}^t_N$ the uniform distribution of $\theta$ over the interval $[t-N,t+N]$.
\begin{multline*}
P[\theta<t|A(\cdot),(x_i,y_i)] =\lim_{N\rightarrow\infty} \frac{P[\theta<t \wedge (x_i,y_i)|A(\cdot),\mathcal{U}^t_N]}{P[(x_i,y_i)|A(\cdot),\mathcal{U}^t_N]} \\
= \lim_{N\rightarrow\infty} \frac{\int_{t-N}^t \frac1{2N}f_x(x_i-\theta)F_y(y_i-A(\theta)) \, d\theta}{\int_{t-N}^t \frac1{2N}f_x(x_i-\theta)f_y(y_i-A(\theta)) \, d\theta+\int_t^{t+N} \frac1{2N}f_x(x_i-\theta)f_y(y_i-A(\theta)) \, d\theta} \\
= \lim_{N\rightarrow\infty} \frac{1}{1+\kappa_N} \geq c \Leftrightarrow \lim_{N\rightarrow\infty} \kappa_N \leq \frac{1-c}{c}
\end{multline*}
Using appropriate variable transformations ($\tau = \theta - t$ in the nominator and $\tau = t-\theta$ in the denominator) and symmetry of $f_x$ we get
$$\kappa_N = \frac{\int_{0}^{N} f_x(x_i-(t+\tau))f_y(y_i-A(t+\tau))\, d\tau}{\int_{0}^{N} f_x(x_i-(t-\tau))f_y(y_i-A(t-\tau))\, d\tau} \, .$$
Recall, that $y_i \geq 1-\delta-\gamma$ and $x_i\leq t+ \xi$ holds in this case.
\begin{multline*}
\kappa_N \leq \frac{\int_{0}^{N} f_x(x_i-(t+\tau))\, d\tau}{\int_{0}^{N} f_x(x_i-(t-\tau))\, d\tau} \, \frac{\sup_{a\in[0,\delta]} f_y(y_i-a)}{\inf_{a\in[1-\delta,1]} f_y(y_i-a)} \\
= \frac{F_x(x_i-t)-F_x(x_i-t-N)}{F_x(x_i-t+N)-F_x(x_i-t)} \, \frac{\sup_{a\in[0,\delta]} f_y(y_i-a)}{\inf_{a\in[1-\delta,1]} f_y(y_i-a)} \\
\underset{{N\rightarrow \infty}}\longrightarrow \frac{F_x(x_i-t)}{1-F_x(x_i-t)} \, \frac{\sup_{a\in[0,\delta]} f_y(y_i-a)}{\inf_{a\in[1-\delta,1]} f_y(y_i-a)} \\
\leq \frac{F_x(\xi)}{1-F_x(\xi)} \, \frac{\sup_{a\in[0,\delta]} f_y(y_i-a)}{\inf_{a\in[1-\delta,1]} f_y(y_i-a)}\leq \frac{1-c}{c}
\end{multline*}
The last but one inequality uses condition \eqref{cond1_2d}.
\item The proof of the second statement relies on the same arguments used in 1. First, observe that
$$P[\theta\geq t|A(\cdot),(x_i,y_i)] = \lim_{N\rightarrow\infty} \frac{1}{1+\frac1{\kappa_N}} \geq c \Leftrightarrow \lim_{N\rightarrow\infty} \frac1{\kappa_N} \leq \frac{1-c}{c}\, .$$
Note that $y_i \leq \delta+\gamma$ and $x\geq t- \xi$ holds. Again, we can obtain a statement: 
\begin{multline*}
\frac1{\kappa_N} \leq \frac{1-F_x(-\xi)}{F_x(-\xi)}\, \frac{\sup_{a\in[1-\delta,1]} f_y(y_i-a)}{\inf_{a\in[0,\delta]} f_y(y_i-a)} \\
= \frac{F_x(\xi)}{1-F_x(\xi)}\, \frac{\sup_{a\in[0,\delta]} f_y((1-y_i)-a)}{\inf_{a\in[1-\delta,1]} f_y((1-y_i)-a)} \leq \frac{1-c}{c} \, .
\end{multline*}
The equality uses the symmetry of $F_x$ and $f_y$. The last inequality exploits condition \eqref{cond1_2d} (note that $(1-y_i) > 1 - \delta - \gamma$).
\item The third statement follows from condition \eqref{cond2_2d}. Regarding notation, we use $A^n_t(t^-)$ to denote left limits of the function $A_t$ depicted in Figure \ref{Diagram3}:
\begin{multline}
P[y_i\geq 1-\delta-\gamma \wedge x\leq t+ \xi|A(\cdot),\theta < t] 
\\
\geq P[y_i\geq 1-\delta-\gamma \wedge x\leq t+ \xi|A(\theta)=1-\delta,\theta = t^{-}] \\
\geq F_x(t+\xi-t)(1-F_y(1-\delta-\gamma-(1-\delta))) = F_x(\xi)F_y(\gamma)
\geq 1-\delta
\end{multline} 
\item The fourth statement follows from condition \eqref{cond2_2d} and the following inequalities:
\begin{multline}
P[y_i\leq \delta+\gamma \wedge x> t- \xi|A(\cdot),\theta \geq t] \\
\geq P[y_i\leq \delta+\gamma \wedge x> t- \xi|A(\theta)=\delta,\theta = t] \\
\geq F_y(\gamma)(1-F_x(-\xi)) = F_x(\xi)F_y(\gamma)
\geq 1-\delta
\end{multline}
\end{enumerate}
\end{proof}

\subsection{Iteration}\label{Iteration}

We now use an iteration, in which agents' action $A^{n+1}$ is a function of agents' previous action $A^{n}$, to prove the existence of an equilibrium for a given\footnote{The requirement $1-\delta \geq \gamma$, $1>3\delta + 2\gamma$ in Proposition \ref{infogap_mult_general} implies that the interval is non-degenerated.}
\begin{equation}
t\in[\delta+\gamma,1-\delta-\gamma]\, . \label{def_t}
\end{equation}

We start from a hypothetical situation in which player $i$ faces an aggregate attack $A^0_t$ defined by (see green line below)
\begin{equation}\label{Attheta0}
A^0_t(\theta) := \left\{\begin{array}{ll}
1-\delta \qquad\qquad & \theta < t \\
\delta \qquad\qquad & \text{otherwise.}
\end{array} \right.
\end{equation}

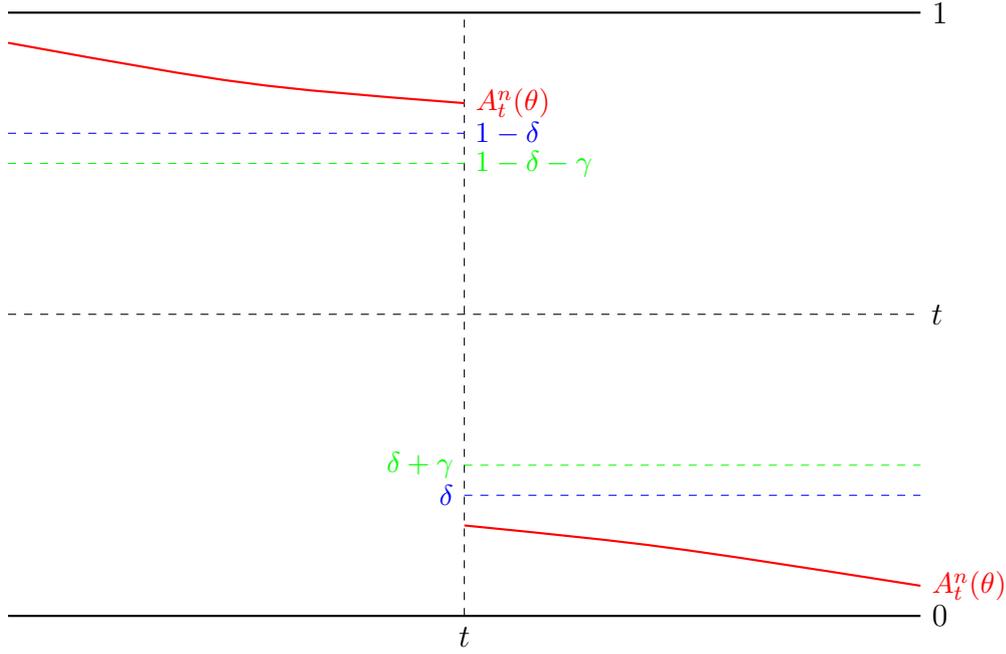
\begin{figure}
\begin{tikzpicture}
\tikzmath{
	\width = 12;
	\height = 8;
	\midheight = \height / 2;
	\atzero = \width / 2;
	\delt = \height / 5;
	\onemindelt = \height - \delt;
	\deltgamma = \height / 4;
	\onemindeltgamma = \height - \deltgamma;
	\upperleft = 0.95 * \height;
	\uppermid = \onemindelt + 0.05 * \height;
	\lowermid = \delt - 0.05 * \height;
	\lowerright = 0.05 * \height;
	\controllefth = (\upperleft + \uppermid) / 2.05;
    \controlleftw = \width / 3.8;
	\controlrighth = (\lowerright + \lowermid) / 1.7;
	\controlrightw = \width * 3 / 4.2;
}
\draw[dashed] (\atzero,0) node[anchor=north] {$t$} -- (\atzero,\height);
\draw[thick] (0,\height) -- (\width,\height) node[anchor=west] {$1$};
\draw[dashed] (0,\midheight) -- (\width,\midheight) node[anchor=west] {$t$};
\draw[dashed, color=blue] (\atzero,\delt) node[anchor=east] {\small $\delta$} -- (\width,\delt) ;
\draw[dashed, color=green] (\atzero,\deltgamma) node[anchor=east] {\small $\delta + \gamma$} -- (\width,\deltgamma) ;
\draw[dashed, color=blue] (0,\onemindelt) -- (\atzero,\onemindelt) node[anchor=west] {\small $1-\delta$};
\draw[dashed, color=green] (0,\onemindeltgamma) -- (\atzero,\onemindeltgamma) node[anchor=west] {\small $1-\delta - \gamma$};
\draw[thick] (0,0) -- (\width,0) node[anchor=west] {$0$};
\draw[thick, color=red] (0,\upperleft) .. controls (\controlleftw,\controllefth) .. (\atzero,\uppermid) node[anchor=west] {\small $A_t^n(\theta)$};
\draw[thick, color=red] (\atzero,\lowermid) .. controls (\controlrightw,\controlrighth) .. (\width,\lowerright) node[anchor=west] {\small $A_t^n(\theta)$};
%\draw (0,\upperleft) .. controls (-1,0.555) and (-0.555,1) .. (\width,\lowerright);
\end{tikzpicture}
\caption{Construction of $A(\theta)$ via iteration.}\label{Iterationsketch}
\end{figure}

We define the set of signals for which a player who would find it optimal to attack given a conjecture about other players' behavior $A^n_t$ by $\Gamma^n$. Using this definition we have
\begin{equation}
A^{n+1}_t(\theta) := \iint_{\Gamma^n} f_x(x-\theta)f_y(y-A_t^n(\theta))\,dx \, dy\,\label{xyz}
\end{equation} 

By induction it follows from the definition of $A^{n+1}_t$ and the inequalities in Lemma \ref{helper_lemma_2d} that $A^n_t \geq 1-\delta$ ($\leq \delta$) for $\theta\leq t$ (otherwise) for all $n\geq 0$.

To complete the proof of Proposition \ref{infogap_mult_general} we use the theorem of Arzelà–Ascoli to show that the sequence $A_t^n$ has a convergent subsequence. That is, we note that $A_t^n$ are continuous, uniformly bounded (by zero and one), and have a uniformly bounded derivative for all $\theta \neq t$, which implies equicontinuity. Hence, the preconditions of the Arzelà–Ascoli theorem are met\footnote{See, e.g., \citeauthor{shilov2013elementary} (\citeyear{shilov2013elementary}, p. 32).} on each interval $[-k,t]$ (define $A_t^n$ at $t$ by the right limit) and $[t,k+1]$ for $k\in\mathbf{N}$ (they are also met for subsequences). Hence, for $k=1$ there is a convergent subsequence on $[-1,t]$ denoted by $A_t^{n_1}$, from which we can select yet another convergent subsequence on $[t,2]$ (and thus on $[-1,2]$) denoted by $A_t^{n_2}$. We can carry out the same procedure for each $k>1$, and receive a (sub)sequence $A_t^{n_{2k}}$ that converges on $[-k,k+1]$. Last but not least, select the $k$-th element of sequence $A_t^{n_{2k}}$ to create a new sequence $A_t^{n_0}$. This sequence is a subsequence of a converging sequence, and hence it converges.

We denote the limit of this sequence by $A_t$. It constitutes an equilibrium due to the continuity of the best-response operator.

% ----------------------------------------------------

\section{One Private Signal with General Error Terms}\label{Tipping}

%Instead of observing fundamentals and actions separately, we now think of a model where agents simply observe one signal that informs them of whether the economy is far away or close to the "tipping point." That is, agents observe a signal over $A-\theta$, which informs them of whether the status quo will be abandoned or not. In the context of a currency crisis, such a variable can be interpreted as net outflows from a specific currency, respectively, the change in foreign currency reserves of the central bank. In the context of riots and revolutions, agents observe the degree to which protesters outnumber the police.

In this section we extend our results from Section \ref{Normal}, and show that "precise private information" ensures multiple equilibria for more general distribution functions.

Suppose agents receive signals:
\begin{equation}\label{infogap}
z_i = A-\theta + \rho_i,
\end{equation}
which inform about the attack's net size. 

%\begin{prop}\label{infogap_mult}
%Suppose that $\rho_i\in[-\sigma,\sigma]$, and that the precision of the signal about the aggregate attack $y_i$ is precise enough, i.e. $0<\sigma<\frac12$ Then, there exist a continuum of equilibria.
%\end{prop}
%\begin{proof}[Sketch of proof]
%The proof is parallel to that of Proposition \ref{cs_mult}: pick an arbitrary $t\in(\sigma,1-\sigma)$ and define $A_t(\theta)$. The signal realizations for different $\theta$ are:
%\begin{equation}\label{infogap_cutoff}
%\begin{array}{ll}
%\text{Case } \theta<t: & \qquad z_i = 1-\theta + \rho_i > \sigma + \rho_i \geq 0 \\
%\text{Case } \theta\geq t: & \qquad z_i = -\theta + \rho_i \leq -\sigma + \rho_i \leq 0.
%\end{array}
%\end{equation}
%Again, there is no overlap in signal realizations in both cases, and the remainder of the argument is parallel to the proof of Proposition \ref{cs_mult}.
%\end{proof}

%\subsection{Unbounded Errors}

We denote by $G$ the cumulative distribution function of the error $\rho_i$, and by $g=G'$ the respective density function\footnote{We assume that $G$ is differentiable, i.e., we rule out atoms}. We simplify our proof and assume that $g$ is supposed to be symmetric around $0$.

\begin{prop}\label{infogap_mult_1d}
Suppose that there exist $\delta>0$ and $\gamma>0$ such that the following conditions hold:
\begin{align}
&\frac{1-G(\xi-\alpha)}{G(\xi-\beta)} \geq \frac{1-c}{c}, \quad\text{for all }\xi\geq 1-\delta-\gamma,\, \alpha\in[0,\delta],\text{ and } \beta\in[1-\delta,1] \label{cond1_infogap_mult_1d}\\
& G(\gamma) \geq 1 - \delta \text{, and} \label{cond2_infogap_mult_1d}\\
& g(\delta-\gamma) < 1\label{cond3_infogap_mult_1d}
\end{align}
Then, there exists a continuum of equilibria.
\end{prop}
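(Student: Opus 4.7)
The proof strategy mirrors that of Proposition \ref{infogap_mult_general}. For each $t$ in the admissible interval $[\delta+\gamma,\,1-\delta-\gamma]$, the plan is to construct an equilibrium $A_t(\theta)$ respecting the box constraints $A_t(\theta)\in[1-\delta,1]$ for $\theta<t$ and $A_t(\theta)\in[0,\delta]$ for $\theta\geq t$. The equilibrium will arise as the limit of a best-response iteration starting from the two-step function $A^0_t(\theta)=(1-\delta)\chi_{\theta<t}+\delta\,\chi_{\theta\geq t}$, with a diagonal Arzelà-Ascoli argument delivering convergence.

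The first step is the one-signal analogue of Lemma \ref{helper_lemma_2d}: for any conjecture $A(\cdot)$ satisfying the box constraints, there exists a signal threshold $z^*$ such that $P(A-\theta>0\mid z_i,A(\cdot))\geq c$ iff $z_i\geq z^*$. Because the box implies $A(\theta)-\theta\geq\gamma$ for $\theta<t$ and $A(\theta)-\theta\leq-\gamma$ otherwise, the relevant probability reduces, under the uniform prior, to $P(\theta<t\mid z_i)$. Writing this as a ratio of integrals and applying the substitutions $\tau=t-\theta$ and $\tau=\theta-t$ on the two sides (as in Appendix \ref{A3} and Lemma \ref{helper_lemma_2d}), I bound the integrands using the box constraints and the symmetry of $g$. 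Integrating in $\tau$ turns densities into $G$-values, and condition \eqref{cond1_infogap_mult_1d} delivers the threshold characterization. Condition \eqref{cond2_infogap_mult_1d} then yields $P(z_i\geq z^*\mid\theta<t)\geq 1-\delta$, with the symmetric statement for $\theta\geq t$.

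Given the threshold $z^*$, the best-response iteration is
\begin{equation*}
A^{n+1}_t(\theta) \;=\; P\bigl(z_i\geq z^* \,\bigm|\, A^n_t(\theta),\theta\bigr) \;=\; 1 - G\bigl(z^*-A^n_t(\theta)+\theta\bigr),
\end{equation*}
and an induction using Step~1 shows that $A^n_t$ respects the box for every $n$. Differentiating the iteration,
\begin{equation*}
\frac{dA^{n+1}_t}{d\theta} \;=\; -\,g\bigl(z^*-A^n_t(\theta)+\theta\bigr)\,\Bigl(1-\frac{dA^n_t}{d\theta}\Bigr).
\end{equation*}
The box pins the argument of $g$ into a range where $g\leq g(\delta-\gamma)$, so condition \eqref{cond3_infogap_mult_1d} gives $|dA^{n+1}_t/d\theta|\leq g(\delta-\gamma)\bigl(1+|dA^n_t/d\theta|\bigr)$, which is a geometric recursion with ratio $<1$. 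Hence the slopes stay uniformly bounded on each compact set disjoint from $\theta=t$. Combined with $A^n_t\in[0,1]$, this delivers equicontinuity on each interval $[-k,t)$ and $(t,k+1]$; the diagonal Arzelà-Ascoli argument from Proposition \ref{infogap_mult_general} then extracts a subsequence converging to a limit $A_t$, and continuity of the best-response operator makes $A_t$ a fixed point, hence an equilibrium. Varying $t$ over the admissible interval produces a continuum.

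The principal difficulty is the slope bound in Step~3. Unlike in the two-signal case, the single signal $z_i=A-\theta+\rho_i$ couples $A$ and $\theta$ through the density $g$ itself, so the best-response depends on the conjecture not only through a conditional probability but also through the implicit dependence of the signal distribution on $A^n_t$; iterating could in principle amplify slopes. Condition \eqref{cond3_infogap_mult_1d} is precisely the contraction hypothesis that tames this coupling and lets Arzelà-Ascoli apply.
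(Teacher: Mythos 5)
Your proposal tracks the paper's own proof almost step for step: the same helper lemma (the paper's Lemma~\ref{helper_lemma_1d}), the same best-response iteration seeded by the two-step conjecture, the same geometric slope recursion keyed to condition \eqref{cond3_infogap_mult_1d}, and the same diagonal Arzel\`a--Ascoli extraction for each $t\in[\delta+\gamma,1-\delta-\gamma]$. So this is not a different route; the question is whether your execution is complete, and in Step 3 it is not.

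The argument of $g$ in the recursion is $z^n-\bigl(A^n_t(\theta)-\theta\bigr)$, where $z^n$ is the attack cutoff induced by the $n$-th conjecture $A^n_t$, and this cutoff moves with $n$. The box constraints control only the term $A^n_t(\theta)-\theta$ (it is $\geq\gamma$ for $\theta<t$ and $\leq-\gamma$ otherwise); they place no restriction whatsoever on $z^n$. Hence your claim that ``the box pins the argument of $g$ into a range where $g\leq g(\delta-\gamma)$'' is false as stated: if some $z^n$ drifted to, say, $2\gamma$, then at values of $\theta$ where $A^n_t(\theta)-\theta=z^n$ the argument of $g$ equals $0$, and for a precise signal the modal value $g(0)$ can exceed $1$ even though $g(\delta-\gamma)<1$. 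In that case the recursion $|dA^{n+1}_t/d\theta|\leq g\cdot\bigl(1+|dA^n_t/d\theta|\bigr)$ loses its contraction factor, the slopes can grow geometrically, equicontinuity fails, and Arzel\`a--Ascoli cannot be invoked. This is precisely why the paper inserts a separate, induction-stable statement, Lemma~\ref{xn_lemma}: every iteration's cutoff satisfies $z^n\in(-\gamma,\gamma)$, proved by comparing the posterior under $A^n_t$ with the posterior under the step conjecture $A^0_t$ and rearranging inequalities in $G$. Only the combination of $z^n\in(-\gamma,\gamma)$ with the lower bound $A^n_t(\theta)-\theta>\delta$ for $\theta<t$ (symmetrically, $A^n_t(\theta)-\theta<-\delta$ otherwise) forces the argument of $g$ to stay at distance at least $\delta-\gamma$ from the mode, which is what makes \eqref{cond3_infogap_mult_1d} --- a hypothesis on $g(\delta-\gamma)$, not on $g(\gamma)$ or $g(0)$ --- the correct contraction condition. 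Your Step 1 does contain raw material for bounding the cutoffs (the threshold characterization obtained from \eqref{cond1_infogap_mult_1d} confines $z^n$ to an interval), but you never extract that bound or feed it into Step 3, and the bound it would deliver, namely $g$ evaluated at the argument being at most $g(\gamma)$, is not the quantity that \eqref{cond3_infogap_mult_1d} restricts. To close the proof you need to state and prove the analogue of Lemma~\ref{xn_lemma} as part of the induction, and then rerun your slope bound with it.
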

\begin{proof}
See Appendix \ref{proof_infogap_mult_1d}.
\end{proof}

The proof's strategy is to fix a $t$, and to start with a guess of the equilibrium aggregate attack. In turn, we compute the best response to that attack. This yields another guess for the aggregate attack, to which we compute the best response.... . We use this argument to construct a converging sequence of aggregate attacks. The limit aggregate attack is an equilibrium. This process can be repeated for an open interval of $t$ values, which establishes that there exists a continuum of equilibria.

 %That is, when bystanders see that the riot police is outnumbered, they are tempted to join in on the protest, which, in turn, attracts an even bigger turn-out... .

\section{Proof of Proposition \ref{infogap_mult_1d}}\label{proof_infogap_mult_1d}

The proof is an adapted version of the proof of Proposition \ref{infogap_mult_general}. We begin by proving a supporting lemma:

\begin{lemma} \label{helper_lemma_1d}
Suppose that for a given $t$ $A(\theta) \in [1-\delta,1]$ if $\theta < t$, and $A(\theta) \in [0,\delta]$ otherwise. Then the following holds:
\begin{enumerate}
\item If $z_i \geq 1 - \delta - t- \gamma$ then $P[\theta<t|A(\cdot),z_i] \geq c$.
\item If $z_i \leq \delta - t + \gamma$, then $P[\theta\geq t|A(\cdot),z_i] \geq c$.
\item $P[z_i \geq 1 - \delta - t- \gamma|A(\cdot),\theta < t]\geq 1-\delta$
\item $P[z_i \leq \delta - t + \gamma|A(\cdot),\theta \geq t]\geq 1-\delta$
\end{enumerate}
\end{lemma}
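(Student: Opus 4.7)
The plan is to mirror Lemma \ref{helper_lemma_2d}, adapting each step to the single-signal structure \eqref{infogap}, where $z_i \mid \theta, A(\cdot)$ has density $g\bigl(z_i - (A(\theta)-\theta)\bigr)$. The key structural difference from the 2D case is that this density no longer factors into an $A$-piece and a $\theta$-piece, so the earlier trick of integrating out $f_x$ independently and bounding $f_y$ by a pointwise $\sup/\inf$ must be replaced by a bound in which the $\sup/\inf$ is carried inside the full integrand.

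For statements 1 and 2 (the posterior-probability bounds), I would proceed via Bayes' rule with the improper uniform prior on $\theta$, handled as a limit of $\mathcal{U}^t_N$-priors as in the 2D proof. Writing $P[\theta<t\mid z_i,A(\cdot)] = 1/(1+\kappa_\infty)$ with
\[
\kappa_N := \frac{\int_t^{t+N} g\bigl(z_i - A(\theta) + \theta\bigr)\, d\theta}{\int_{t-N}^t g\bigl(z_i - A(\theta) + \theta\bigr)\, d\theta},
\]
the target inequality reduces to $\kappa_\infty \leq (1-c)/c$. Applying $\tau = \theta - t$ above and $\tau = t-\theta$ below, and using symmetry of $g$, this becomes
\[
\kappa_N = \frac{\int_0^N g\bigl(\xi + \tau - \alpha(\tau)\bigr)\, d\tau}{\int_0^N g\bigl(\beta(\tau) + \tau - \xi\bigr)\, d\tau},
\]
with $\xi := z_i + t \geq 1-\delta-\gamma$, $\alpha(\tau) = A(t+\tau)\in[0,\delta]$, $\beta(\tau) = A(t-\tau)\in[1-\delta,1]$. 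I would then bound pointwise by $\sup_{\alpha\in[0,\delta]} g(\xi+\tau-\alpha)$ and $\inf_{\beta\in[1-\delta,1]} g(\beta+\tau-\xi)$, perform each $\tau$-integral as a $G$-increment, send $N\to\infty$, and obtain a bound on $\kappa_\infty$ of the form $(1-G(\xi-\alpha^*))/G(\xi-\beta^*)$ with $\alpha^*\in[0,\delta]$ and $\beta^*\in[1-\delta,1]$. Condition \eqref{cond1_infogap_mult_1d} then yields the desired $(1-c)/c$ bound. Statement 2 is proved by the same argument applied to $1/\kappa_\infty$, with the roles of $\alpha$ and $\beta$ exchanged via symmetry of $g$.

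For statements 3 and 4 (the signal-probability bounds), I would lower-bound the conditional probability at the worst realization of $(\theta, A(\theta))$ on the relevant side of $t$, namely $\theta = t^-$ with $A(t^-) = 1-\delta$ for statement 3 and $\theta = t$ with $A(t) = \delta$ for statement 4. At the former, $z_i = (1-\delta) - t + \rho_i$, so $\{z_i \geq 1-\delta-t-\gamma\}$ reduces to $\{\rho_i \geq -\gamma\}$, whose probability is $G(\gamma) \geq 1-\delta$ by symmetry of $g$ together with \eqref{cond2_infogap_mult_1d}; statement 4 is handled symmetrically.

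The main obstacle will be executing the pointwise $\sup/\inf$ bound cleanly: in 1D the integrand depends on $A$ through the only argument of $g$, so carrying the extremum inside the integral does not preserve the simple closed form the 2D proof enjoyed. I expect the auxiliary condition \eqref{cond3_infogap_mult_1d}, which controls $g$ at the displacement $\delta-\gamma$ produced by the worst-case gap between $\alpha^*$ and the competing $\beta^*$, to be exactly what allows one to identify those extremal values and to translate the pointwise bound back into the CDF ratio that \eqref{cond1_infogap_mult_1d} controls.
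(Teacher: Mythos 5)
Your treatment of parts 3 and 4 coincides with the paper's: evaluate the conditional probability at the boundary configuration $\theta=t^{-}$, $A=1-\delta$ (resp.\ $\theta=t$, $A=\delta$), reduce the event to $\{\rho_i\geq-\gamma\}$ (resp.\ $\{\rho_i\leq\gamma\}$), and invoke symmetry together with \eqref{cond2_infogap_mult_1d}. Your setup for parts 1 and 2 — the uniform-prior limit, the ratio $\kappa_N$, the change of variables, and the reduction to a ratio controlled by \eqref{cond1_infogap_mult_1d} — is also the paper's skeleton. Where you diverge is the key reduction step: the paper does not push a pointwise $\sup/\inf$ inside the integrals, but instead asserts, by continuity of $g$, the existence of \emph{constants} $\alpha_N\in[0,\delta]$ and $\beta_N\in[1-\delta,1]$ for which the varying-$A$ integrals equal the constant-shift integrals; the limit is then literally of the form $(1-G(\xi-\alpha))/G(\xi-\beta)$ that \eqref{cond1_infogap_mult_1d} covers.

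Your variant has a concrete gap in the denominator. The pointwise infimum $\inf_{\beta\in[1-\delta,1]}g(\beta+\tau-\xi)$ is attained at whichever endpoint of $[1-\delta,1]$ maps farther from the mode of $g$, and that endpoint switches at $\tau=\xi-1+\delta/2$. Hence, for $\xi>1-\delta/2$ (which is allowed, since the lemma must hold for all $z_i\geq 1-\delta-t-\gamma$, i.e.\ all $\xi\geq1-\delta-\gamma$), one computes
\begin{equation*}
\int_0^\infty \inf_{\beta\in[1-\delta,1]}g(\beta+\tau-\xi)\,d\tau \;=\; 2G\bigl(-\tfrac{\delta}{2}\bigr)-G(1-\delta-\xi),
\end{equation*}
which lies \emph{below} $G(\xi-1)=\min_{\beta^*\in[1-\delta,1]}G(\xi-\beta^*)$, strictly so for strictly single-peaked $g$, because a centered interval of length $\delta$ maximizes $\int_u^{u+\delta}g$. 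So no $\beta^*\in[1-\delta,1]$ delivers the form $(1-G(\xi-\alpha^*))/G(\xi-\beta^*)$ you announce, and \eqref{cond1_infogap_mult_1d} cannot be applied as you describe. (The numerator half of your bound is fine: since $\xi-\delta>0$, the supremum sits at $\alpha^*=\delta$ uniformly in $\tau$; and, to be fair, the paper's own intermediate-value assertion quietly glosses over the same difficulty in the denominator.) The step can be repaired — note that $2G(-\delta/2)-G(1-\delta-\xi)\geq G(-\delta/2)$ in this regime and apply \eqref{cond1_infogap_mult_1d} at the shifted triple $\xi'=1-\delta/2$, $\alpha=\delta$, $\beta=1$ — but that extra argument is exactly what your sketch lacks. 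Finally, your expectation that \eqref{cond3_infogap_mult_1d} closes this hole misreads that condition's role: $g(\delta-\gamma)<1$ is used in Appendix \ref{proof_infogap_mult_1d} only to bound the derivative of the iterates $A^n_t$ (via Lemma \ref{xn_lemma}) so that equicontinuity and Arzel\`a--Ascoli apply; it enters nowhere in the proof of this lemma.
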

\begin{proof}
\begin{enumerate}
\item To prove the first inequality in 1. we define
$$\kappa_N := \frac{\int_{t}^{t+N} g(z_i-A(\theta)+\theta) \, d\theta}{\int_{t-N}^{t} g(z_i-A(\theta)+\theta) \, d\theta}\, ,$$
and show that both inequalities in 1. can be reduced to statements about $\kappa_N$.
\begin{multline*}
P[\theta<t|A(\cdot),z_i] =\lim_{N\rightarrow\infty} \frac{P[\theta<t \wedge z_i|A(\cdot),\mathcal{U}^t_N]}{P[z_i|A(\cdot),\mathcal{U}^t_N]} \\
= \lim_{N\rightarrow\infty} \frac{\int_{t}^{t+N} g(z_i-A(\theta)+\theta) \, d\theta}{\int_{t}^{t+N} g(z_i-A(\theta)+\theta) \, d\theta+ \int_{t}^{t+N} g(z_i-A(\theta)+\theta) \, d\theta} \\
= \lim_{N\rightarrow\infty} \frac{1}{1+\kappa_N} \geq c \Leftrightarrow \lim_{N\rightarrow\infty} \kappa_N \leq \frac{1-c}{c}
\end{multline*}
Due to continuity of $g$ there exist $\alpha_N\in[0,\delta]$ and $\beta_N\in[1-\delta,1]$ such that
\begin{multline*}\kappa_N = \frac{\int_{t}^{t+N} g(z_i-\alpha_N+\theta) \, d\theta}{\int_{t-N}^{t} g(z_i-\beta_N+\theta) \, d\theta} = \frac{G(z-\alpha_N+\theta+N)-G(z-\alpha_N+\theta)}{G(z-\beta_N+\theta)-G(z-\beta_N+\theta-N)} \\
\underset{{N\rightarrow \infty}}\longrightarrow \frac{1-G(z-\alpha+\theta)}{G(z-\beta+\theta)} \leq \frac{1-c}{c} \, ,
\end{multline*}
where $\alpha\in[0,\delta]$ and $\beta\in[1-\delta,1]$ are the limits of the respective sequence. %(it might be necessary to consider a subsequence for these limits to exist). 
The last inequality exploits condition \eqref{cond1_infogap_mult_1d} and implies that $P[\theta<t|A(\cdot),z_i] \geq c$.
\item The proof of property 2. works along the lines of step 1., or alternatively like step 2. in the proof of Lemma \ref{helper_lemma_2d}.
\item To prove the third property, we use the notation $A^n_t(t^-)$ to denote left limits of the function $A_t$ depicted in Figure \ref{Diagram3}:
\begin{multline*}
P[z_i \geq 1 - \delta - t - \gamma|A(\cdot),\theta < t] \geq  P[z_i \geq 1 - \delta - t - \gamma|A(\theta) = 1- \delta,\theta = t^-] \\
= P[\rho_i\geq - \gamma] = 1 - G(-\gamma) = G(\gamma) \geq 1 - \delta
\end{multline*}
The last but one equality holds due $G$ being symmetric, the last equality due to condition \eqref{cond2_infogap_mult_1d}.
\item Part 4. follows from:
\begin{multline*}
P[z_i \leq \delta - t + \gamma|A(\cdot),\theta \geq t] \geq  P[z_i \leq \delta - t + \gamma|A(\theta) = \delta,\theta =t] \\
= P[\rho_i\leq \gamma] = G(\gamma) \geq 1 - \delta
\end{multline*}
Again, the last equality is due to condition \eqref{cond2_infogap_mult_1d}.
\end{enumerate}
\end{proof}

The remaining part of the proof of Proposition \ref{infogap_mult_general} applies directly with two exceptions:
\begin{enumerate}
\item For given $A^n$ there exists a cutoff $z_n$ such that a player attacks iff $z_i<z_n$. This allows to adjust the definition of $A^{n+1}_t$ given $A^n_t$;. 
$$A^{n+1}_t(\theta) := \int_{-\infty}^{z_n} g (z- A^n_t(\theta)+\theta)) \, dz \, .$$
\item To establish equicontinuity of $A^n_t$, we start with another supporting Lemma:
\begin{lemma}\label{xn_lemma}
	Suppose agents consider the aggregate attack (function) to be equal to $A^n_t$ and that $A^n_t(\theta) > t + \delta $ for $\theta<t$, and $A^n_t(\theta) < t-\delta$ otherwise. Then, the agent's attack cutoff is $z^n\in(-\gamma,\gamma)$.
\end{lemma}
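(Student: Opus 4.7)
Under the hypothesis of the lemma we have $A^n_t(\theta)>t+\delta>\theta$ for $\theta<t$ and $A^n_t(\theta)<t-\delta\le\theta$ for $\theta\ge t$, so the conjectured aggregate attack succeeds exactly on the event $\{\theta<t\}$. The agent's indifference condition $P[A(\theta)>\theta\mid z_i=z^n,A^n_t(\cdot)]=c$ therefore collapses to $P[\theta<t\mid z_i=z^n]=c$. Because the posterior of ``success'' inherits an MLRP from the signal structure once $A^n_t$ has the postulated step-like shape, it is monotone in $z_i$, and my plan is to pin down $z^n\in(-\gamma,\gamma)$ by showing $P[\theta<t\mid z_i=\gamma]>c$ (hence $z^n<\gamma$) and $P[\theta<t\mid z_i=-\gamma]<c$ (hence $z^n>-\gamma$).

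For each of these, I would imitate the argument used in part 1 of Lemma \ref{helper_lemma_1d}: regularize the improper uniform prior to $[t-N,t+N]$, apply Bayes's rule, and let $N\to\infty$. A change of variable $\tau=\theta-t$ on the failure branch and $\tau=t-\theta$ on the success branch rewrites the posterior as $\lim_{N\to\infty} 1/(1+\kappa_N(z))$ with
\begin{equation*}
\kappa_N(z)=\frac{\int_0^N g\bigl(z-A^n_t(t+\tau)+(t+\tau)\bigr)\,d\tau}{\int_0^N g\bigl(z-A^n_t(t-\tau)+(t-\tau)\bigr)\,d\tau}.
\end{equation*}
The hypothesis gives $A^n_t(t-\tau)-(t-\tau)>\delta+\tau$ and $A^n_t(t+\tau)-(t+\tau)<-\delta-\tau$, so the numerator integrand is $g$ evaluated at points exceeding $z+\delta+\tau$ and the denominator integrand at points below $z-\delta-\tau$. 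Using the symmetry $g(-x)=g(x)$ to reflect the denominator's arguments to the positive half-line, $\kappa_\infty(z)$ reduces to a ratio of tail integrals of $g$, which can be re-expressed through $G$ exactly as in Lemma \ref{helper_lemma_1d}.

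Plugging in $z=\gamma$ turns the required inequality $\kappa_\infty(\gamma)\le(1-c)/c$ into an instance of \eqref{cond1_infogap_mult_1d} with admissible $\xi,\alpha,\beta$, delivering $P[\theta<t\mid z_i=\gamma]\ge c$ (strictly, once the shifts $+\tau$ are accounted for). The case $z=-\gamma$ then follows by reflecting numerator and denominator through $g(-x)=g(x)$, which flips $\kappa_\infty$ to its reciprocal and yields $P[\theta<t\mid z_i=-\gamma]<c$.

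The step I expect to be the most delicate is controlling the integrands on the ``transition band'' where the arguments of $g$ in numerator and denominator are closest together, since $g$ is assumed symmetric but not unimodal, so the naive bound $g(\cdot)\le g(\delta-\gamma)$ does not follow from symmetry alone. Condition \eqref{cond3_infogap_mult_1d}, $g(\delta-\gamma)<1$, is precisely the quantitative input that absorbs this contribution; combined with the parameter bounds $1-\delta\ge\gamma$ and $1>3\delta+2\gamma$ from Proposition \ref{infogap_mult_general}, it guarantees $\delta-\gamma>0$ and keeps the two branches of $A^n_t$ far enough apart that the change-of-variables bounds remain tight.
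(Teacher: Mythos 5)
Your high-level plan coincides with the paper's: show that every signal $z\geq\gamma$ makes attacking optimal and every $z\leq-\gamma$ makes abstaining optimal, computing posteriors as limits of uniform priors exactly as in Lemma \ref{helper_lemma_1d}. The gap is in your central step, the claim that $\kappa_\infty(\gamma)\leq\frac{1-c}{c}$ is ``an instance of \eqref{cond1_infogap_mult_1d} with admissible $\xi,\alpha,\beta$.'' It is not. Condition \eqref{cond1_infogap_mult_1d} only covers $\xi\geq 1-\delta-\gamma$ with $\alpha\in[0,\delta]$, $\beta\in[1-\delta,1]$, whereas your bounds place the numerator's arguments near $\gamma+\delta$; writing $\gamma+\delta=\xi-\alpha$ with $\alpha\leq\delta$ forces $\xi\leq\gamma+2\delta$, and $\gamma+2\delta\geq 1-\delta-\gamma$ is precisely the negation of the standing assumption $1>3\delta+2\gamma$. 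So the condition never applies in the relevant parameter regime. In addition, your treatment of the denominator goes the wrong way: you need a \emph{lower} bound on the success-side likelihood mass, but knowing only that its arguments lie below $z-\delta-\tau$ yields (at best, and only where $g$ is monotone) an \emph{upper} bound.

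The deeper problem is that the lemma cannot be proved from its literal one-sided hypothesis, which is all your argument uses. The hypothesis admits the extreme profile $A^n_t\equiv 1$ on $\theta<t$ and $\equiv 0$ on $\theta\geq t$; for it, the posterior at $z=\gamma$ equals $G(\gamma+t-1)/\left[G(\gamma+t-1)+1-G(\gamma+t)\right]$, and for concentrated $G$ and $t$ near $\delta+\gamma$ (so $\gamma+t<\tfrac12$) this tends to $0<c$, i.e.\ the cutoff exceeds $\gamma$ even though \eqref{cond1_infogap_mult_1d}--\eqref{cond3_infogap_mult_1d} hold. Any valid proof must therefore exploit the two-sided bounds $A^n_t\in[1-\delta,1]$ on $\theta<t$ and $A^n_t\in[0,\delta]$ on $\theta\geq t$ that the iteration preserves; this is what the paper's proof does, writing the posterior through the one-sided limits $A^n_t(t^-),A^n_t(t^+)$ and comparing it with the posterior generated by the benchmark $A^0_t$, for which Lemma \ref{helper_lemma_1d} (i.e.\ condition \eqref{cond1_infogap_mult_1d}) already gives the answer. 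Two further slips: monotonicity of the posterior in $z_i$ (your MLRP step) does not follow from symmetry of $g$ alone --- the paper sidesteps this by proving the decisive-posterior inequality for \emph{all} $z\geq\gamma$ rather than only at the endpoint; and condition \eqref{cond3_infogap_mult_1d} cannot ``absorb'' the transition band --- it is a pointwise bound on the density's height, used in the paper solely for the uniform derivative bound in the Arzel\`a--Ascoli step, and it carries no information about ratios of tail masses.
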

\begin{proof}
	An agent receiving a signal $z_i\geq \gamma$ would attack, where we use the notation $A^n_t(t^-)$ to denote left limits and $A^n_t(t^+)$ to denote right limits:
	\begin{multline*}
	P[\theta < t|z_i=z,A^n_t] =\footnotemark \frac{G(z-A^n_t(t^-)+t)}{G(z-A^n_t(t^-)+t) + 1 - {G(z-A^n_t(t^+)+t)}}\\
	\geq\footnotemark \frac{1-G(z+\delta)}{G(z-\delta) + 1 - G(z+\delta)} 
	= P[\theta < t|z_i=z,A^0_t]> P^*
	\end{multline*}
	Hence,\addtocounter{footnote}{-1}\footnotetext{Use analogous computations as in step 2.}\stepcounter{footnote}\footnotetext{The inequality can be shown by rearranging the following inequality: $$\frac{G(z-A^n_t(t^-)+t)}{G(z-\delta)}\geq 1 \geq \frac{1-G(z-A^n_t(t^+)+t)}{1-G(z+\delta)}$$} for the cutoff $z^n<\gamma$ has to hold. On the other hand, $z^n>-\gamma$ holds, as an agent receiving a signal $z_i\leq -\gamma$ would not attack:
	\begin{multline*}
	P[\theta\geq t|z_i=z,A^n_t] =  \frac{1 - G(z-A^n_t(t^+)+t)}{G(z-A^n_t(t^-)+t) + 1 - {G(z-A^n_t(t^+)+t)}}\\
	\geq \frac{1-G(z+\delta)}{G(z-\delta) + 1 - G(z+\delta)} = P[\theta\geq t|z_i=z,A^0_t]> 1-P^*
	\end{multline*}
\end{proof}

Now, note that $A_t^n$ are continuous, uniformly bounded and have a uniformly bounded derivative for all $\theta \neq t$:
{\small\begin{align*}\frac{d A_t^{n+1}}{d \theta}(\theta) &= - g(z^n-A_t^n(\theta)+\theta)\left(\frac{d A_t^n}{d \theta}(\theta)-1\right) \\
	&= -g(z^n-A_t^n(\theta)+\theta)\, \frac{1-g(z^n-A_t^n(\theta)+\theta)^n}{1-g(z^n-A_t^n(\theta)+\theta)}\\
	&\geq -\max\left\{g(\gamma-\delta)\, \frac{1-g(\gamma-\delta)^n}{1-g(\gamma-\delta)},g(-\gamma+\delta)\, \frac{1-g(-\gamma+\delta)^n}{1-g(-\gamma+\delta)} \right\}\\
	&> \frac{-g(\gamma-\delta)}{1-g(\gamma-\delta)}
	\end{align*}}The second equality can be shown by induction using $\frac{d A_t^0}{d \theta}(\theta) = 0$. We use the notation to $A_t^n(t^-)$ for the right limit of $A_t^n$ for $\theta\rightarrow t$. The first inequality holds because $g$ is increasing (decreasing) on $\theta < t$ (otherwise), $z^n\in(-\gamma,\gamma)$ (Lemma \ref{xn_lemma}), and $A_t^n(\theta)\geq t+ \delta$ on $\theta < t$ and $A_t^n(\theta) < t - \delta$ otherwise. The second inequality holds due to the symmetry of $g$ and condition \eqref{cond3_infogap_mult_1d}. Also note, that the second equality implies that the derivative of $A_t^n$ is non-positive. Thus, we have established uniform upper and lower bounds.
\end{enumerate}

\newpage
\addcontentsline{toc}{section}{References}
\markboth{References}{References}
\bibliographystyle{apalike}
\bibliography{References}

\end{document}